\title{Hamiltonisation, measure preservation and first integrals of the multi-dimensional rubber Routh sphere\footnote{This research was made possible by a Georg Forster Experienced Researcher Fellowship  from the 
Alexander von Humboldt Foundation that funded a research stay of the author at TU Berlin.}}
\author{ Luis C.~Garc\'ia-Naranjo }
\numberwithin{equation}{section}
\numberwithin{table}{section}
\numberwithin{figure}{section}
\newtheorem{theorem}{Theorem}[section]
\newtheorem{lemma}[theorem]{Lemma}
\newtheorem{proposition}[theorem]{Proposition}
\newtheorem{corollary}[theorem]{Corollary}
\theoremstyle{definition}
\newtheorem{definition}[theorem]{Definition}
\newtheorem{remark}[theorem]{Remark}
\newtheorem*{remarks*}{Remarks}
\providecommand{\customgenericname}{}
\newcommand{\newcustomtheorem}[2]{%
  \newenvironment{#1}[1]
  {%
   \renewcommand\customgenericname{#2}%
   \renewcommand\theinnercustomgeneric{##1}%
   \innercustomgeneric
  }
  {\endinnercustomgeneric}
}
\newcommand{\defn}[1]{{\bfseries\itshape{#1}}}
\def\headcolour{\color{Grey}}
\headcolour\textsc{L.C.~Garc\'ia-Naranjo }]{\headcolour\textsc{ Hamiltonisation of the multi-dimensional rubber Routh sphere}}
\def\restr#1{\,\vrule height1.2ex width.4pt
  depth0.8ex\lower0.4ex\hbox{\scriptsize $\,#1$}}
\newcommand{\R}{\mathbb{R}}
\newcommand{\I}{\mathbb{I}}
\newcommand{\J}{\mathbb{J}}
\newcommand{\g}{\mathfrak{g}}
\newcommand{\so}{\mathfrak{so}}
\newcommand{\SO}{\mathrm{SO}}
\newcommand{\SE}{\mathrm{SE}}
\newcommand{\Ad}{\mathrm{Ad}}
\newcommand{\Ss}{\mathrm{S}}
\newcommand{\se}{\frak{se}}
\newcommand{\tr}{\mathop\mathrm{tr}\nolimits}
\begin{document}

\maketitle

\begin{abstract}
We consider the multi-dimensional generalisation of the problem of a sphere, with  axi-symmetric mass distribution, that rolls without slipping
or spinning over a plane. Using recent results from Garc\'ia-Naranjo~\cite{LGN18} and
Garc\'ia-Naranjo and Marrero~\cite{GNMarr2018}, we show that the reduced equations of motion possess an invariant measure and may be represented in
Hamiltonian form by Chaplygin's reducing multiplier method. 
We also prove a general result on the existence of first integrals for certain
  Hamiltonisable Chaplygin systems with 
internal symmetries that is used  to determine  
conserved quantities of the problem.
  \end{abstract}

{\small
\tableofcontents
}

\section{Introduction}

An important contribution of S. A. Chaplygin to the field of nonholonomic systems was the introduction of the
so-called {\em Chaplygin's reducing multiplier method}~\cite{ChapRedMult}. 
It is concerned with a certain class
of nonholonomic systems
with symmetry, commonly referred to today  as {\em nonholonomic Chaplygin systems}, whose reduced 
equations of motion have the form of a classical mechanical system subjected to extra gyroscopic forces.
Chaplygin's method consists of searching for a {\em time reparametrisation} hoping that in the 
new time variable, and after a momentum rescaling, the extra forces vanish and the resulting system is 
Hamiltonian. If successful, this process is often 
referred to as \defn{Chaplygin Hamiltonisation}. It is also common to 
say that the Chaplygin system at hand, in the original time variable, is \defn{conformally Hamiltonian}.
The subclass of Chaplygin systems allowing a Chaplygin Hamiltonisation is
quite remarkable, and substantial effort has been devoted to their study and characterisation (see e.g.~\cite{ChapRedMult,Iliev1985,
Stanchenko,BorMamHam, FedJov, EhlersKoiller, BolsBorMam2015, Fernandez, LGN18} and references therein). The purpose
of this paper is to provide a new non-trivial example within this category, 
and to prove a considerably general Noether-type of result for these systems, 
which links their internal symmetries with first integrals, and which we apply to our example.

\subsection*{$\phi$-simple Chaplygin systems: Hamiltonisation and first integrals.}

Our approach to Chaplygin Hamiltonisation relies on the notion of {\em $\phi$-simple Chaplygin systems} 
 introduced recently in Garc\'ia-Naranjo and Marrero~\cite{GNMarr2018}. These
 systems form an exceptional  subclass of nonholonomic Chaplygin systems that always 
 possess an invariant measure and
allow a Chaplygin Hamiltonisation.

 The definition of $\phi$-simple Chaplygin systems relies  on 
  a  certain tensor field  $\mathcal{T}$ of type $(1,2)$ defined on the {\em shape space}\footnote{the
 shape space is the quotient manifold $S=Q/G$ where $Q$ is the configuration 
 manifold of the system and $G$ is the underlying symmetry group of the Chaplygin system.}
$S$ of the system, which measures the interplay between the kinetic energy and the non-integrability
of the constraint distribution. This tensor field already  appears in the works of  Koiller~\cite{Koi} 
and Cantrijn et al~\cite{CaCoLeMa}, and, following the terminology of~\cite{LGN18, GNMarr2018}, will be
called the {\em gyroscopic tensor}. A Chaplygin system is said to be \defn{$\phi$-simple} if there exists
a   function $\phi \in C^\infty(S)$ such that the gyroscopic 
 tensor ${\mathcal T}$ satisfies\footnote{throughout the paper
 we denote by $Y[f]$ the action of the vector field $Y$ on the 
scalar function $f$.}
\begin{equation}\label{Ham-condition-intro}
{\mathcal T}(Y, Z) =Z[\phi]Y - Y[\phi]Z, 
\end{equation}
for any two vector fields $ Y, Z$ on $S$. The above  condition was obtained as 
the coordinate-free formulation
of the recent results on Chaplygin Hamiltonisation given recently by the author~\cite{LGN18}.
  It is shown in~\cite{GNMarr2018} that the condition to be $\phi$-simple
 is equivalent to 
the verification of certain sufficient conditions for 
Chaplygin Hamiltonisation given previously by Stanchenko~\cite{Stanchenko} and
 Cantrijn et al~\cite{CaCoLeMa}. 
The advantage of the formulation in~\cite{LGN18} and \cite{GNMarr2018} with respect to these
references is that condition~\eqref{Ham-condition-intro} can 
be systematically examined in concrete examples.  

The statement that a $\phi$-simple Chaplygin system allows a Chaplygin Hamiltonisation is independent
of the number of degrees of freedom of the problem, and may be interpreted as a generalisation of the 
celebrated Chaplygin's Reducing Multiplier Theorem~\cite{ChapRedMult}
 - whose applicability is restricted to systems whose
 shape space has dimension $2$ - see the discussion in~\cite{LGN18} and \cite{GNMarr2018}.

The criterion of $\phi$-simplicity has already been used in~\cite{LGN18} and \cite{GNMarr2018} 
to establish the Hamiltonisation of non-trivial examples. Among them is  the multi-dimensional Veselova 
problem whose Hamiltonisation was first proven by Fedorov and Jovanovi\'c~\cite{FedJov, FedJov2} 
by a direct application of Chaplygin's method to the reduced equations of motion. 
 In Section~\ref{S:Example} of this paper we prove that the multi-dimensional rubber Routh sphere 
(introduced below) is  also $\phi$-simple. This allows us to prove that the system allows
a Chaplygin Hamiltonisation, and to give a closed formula for its invariant measure,  without
writing the equations of motion. Our results seem to support the thesis that $\phi$-simplicity
is the relevant mechanism behind the Chaplygin Hamiltonisation of concrete examples.\footnote{this
statement is meant within the framework of Hamiltonisation of Chaplygin systems. Other examples, 
like the remarkable Hamiltonisation of the Chaplygin sphere obtained by
Borisov and Mamaev~\cite{BorMamChap}, involve a further symmetry reduction \cite{BorMamHam} 
and other geometric
mechanisms come into play to ensure that invariant first integrals descend to the quotient space
as Casimir functions~\cite{LGN-JM17}. }

In Section~\ref{S:Noether} we prove  a general result that
shows how  extra symmetries of $\phi$-simple Chaplygin systems leads to the existence
of conserved quantities (Theorem~\ref{Th:Noether}). This result is applied to find first integrals 
of the multi-dimensional rubber Routh sphere in Section~\ref{SS:Integrability}. 
This contributes to the recent
efforts to understand the mechanisms responsible for the existence of
first integrals that are linear in velocities in nonholonomic mechanics~(see e.g. \cite{Iliev1975, Fasso2, FassoJGM,BalseiroSansonetto}).

\subsection*{The multi-dimensional rubber Routh sphere}

Routh~\cite{Routh} considered the problem of a sphere, whose
distribution of mass is axially symmetric, that rolls without slipping on the plane. 
Later, Borisov and Mamaev~\cite{BorMam2008, BorMamRubber} considered the 
problem under an additional {\em rubber}\footnote{the {\em rubber} terminology 
for constraints that prohibit spinning  goes back to Ehlers et al~\cite{EhlersKoiller} and
Koiller and Ehlers~\cite{KoillerRubber}
and is now quite standard in the field.} constraint
that forbids spinning. In this paper we consider the multi-dimensional generalisation of this system.
Our terminology {\em multi-dimensional rubber Routh sphere} is supposed to 
indicate the presence of a no-spin  constraint in the word {\em rubber}, and the axi-symmetric 
assumption on the mass
distribution of the sphere with the mention of {\em Routh}'s name. A closely related problem is the multi-dimensional
rubber Chaplygin sphere considered by Jovanovi\'c~\cite{JovaRubber}.

The study of multi-dimensional systems in nonholonomic mechanics goes back to Fedorov and
Kozlov~\cite{FedKoz}, and has received wide attention as a source of interesting 
examples for integrability, Hamiltonisation and other types of dynamical 
features~\cite{ZenkovSuslov,Jov2003, FedJov,FedJov2,Jovan,Jova18,FGNS18,FGNM18,Gajic}.
Our analysis of the multi-dimensional rubber Routh sphere contributes to enlarge this family of examples.

\subsection*{Structure of the paper}

In Section~\ref{S:ChapSyst} we present a quick review of the recent 
constructions in~\cite{LGN18, GNMarr2018}. This summary includes 
the definition of the gyroscopic tensor of 
and its expression in local coordinates. We also recall the notion of $\phi$-simplicity
(described above) and, in Theorem~\ref{T:phi-simple}, we indicate its precise relationship with measure preservation and Hamiltonisation. Section~\ref{S:Noether} is completely devoted to
Theorem~\ref{Th:Noether} that relates internal symmetries of $\phi$-simple Chaplygin systems
to first integrals. Section~\ref{S:Example} is concerned with the multidimensional rubber Routh sphere.
To simplify the reading, we first treat the 3D system in subsection~\ref{SS:3D} and then proceed to the $n$D
generalisation  in
subsection~\ref{SS:nD}. The $\phi$-simplicity of the system is presented in 
Theorem~\ref{T:phiSimpleRouthnD}
and the consequential measure preservation and Hamiltonisation properties 
in Corollary~\ref{Cor:MeasHam}.
The results of  Section~\ref{S:Noether} are then applied to 
determine first integrals of the problem in subsection~\ref{SS:Integrability}. 
The paper finishes with an Appendix~\ref{app:proofs-lemmas} that contains the
proof of a technical lemma needed in the proof of  Theorem~\ref{T:phiSimpleRouthnD}.

\section{Preliminaries: a review of $\phi$-simple Chaplygin systems and their measure preservation
and Hamiltonisation properties}
\label{S:ChapSyst}

In this section we briefly recall the notion of  nonholonomic Chaplygin systems and, more specifically,
 $\phi$-simple Chaplygin systems introduced in Garc\'ia-Naranjo and Marrero~\cite{GNMarr2018}, together with their measure preservation and Hamiltonisation properties.

\subsection{Nonholonomic Chaplygin systems and the gyroscopic tensor}
For our purposes, a \defn{ nonholonomic system} is a triple  $(Q,D,L)$. Here $Q$ is an  $n$-dimensional smooth 
manifold modelling 
the configuration space of the system. $D\subset TQ$ is a vector sub-bundle whose fibres define a non-integrable 
distribution on $Q$ of constant rank  $r\geq 2$,  that models $n-r$ linear 
nonholonomic constraints as follows: a curve $q(t)$ on $Q$ is said to satisfy the constraints
if and only if $\dot q(t)\in D_{q(t)}$  for all $t$. Finally, $L:TQ\to \R$ is the Lagrangian of the system that is assumed to be of 
{\em mechanical type}, namely 
\begin{equation*}
L=K-U,
\end{equation*}
 where the kinetic energy $K$ defines a Riemannian metric $\llangle \cdot , \cdot \rrangle$ on 
 $Q$, and $U:Q\to \R$ is the potential energy.

The triple $(Q,D,L)$ contains all the information for  the evolution of the system in accordance with the
 {\em Lagrange-D'Alembert principle} of ideal constraints. The (velocity) phase space of the system is $D$ and the dynamics is
  described by the flow of a uniquely defined vector field $X_{nh}\in \frak{X}(D)$. An intrinsic definition of this vector field may
  be found, for instance, in~\cite{LeMa}.

\begin{definition}
\label{D:Chap}
The nonholonomic system $(Q,D,L)$ is said to be a \defn{Chaplygin system} if there exists an ($n-r$)-dimensional  Lie group $G$ 
acting freely and properly on $Q$ and satisfying the following properties:
\begin{enumerate}
\item $G$ acts by isometries with respect to the kinetic energy metric $\llangle \cdot , \cdot \rrangle$, and the potential energy $U$ is invariant,
\item $D$ is invariant in the sense that $D_{g\cdot q}=Tg(D_q)$ for all $g\in G$ and $q\in Q$,
\item for all $q\in Q$ the following direct sum splitting holds
\begin{equation*}
T_qQ =  \g \cdot q \oplus D_q,
\end{equation*}
where $\g$ denotes the Lie algebra of $q$, and $\g \cdot q$ is the tangent space to the orbit through $q$ at $q$.
\end{enumerate}
\end{definition}

\begin{remark} Chaplygin systems as defined above 
are also referred to in  the literature as {\em non-abelian Chaplygin systems}  \cite{Koi},
 {\em generalised Chaplgyin systems} \cite{CaCoLeMa,FedJov} or the {\em principal kinematic case of a nonholonomic
 system with symmetries}~\cite{BKMM}. 
\end{remark}

The smooth $r$-dimensional manifold $S:=Q/G$ associated to a Chaplygin system is called the \defn{shape space}.
 As a consequence of the  first and second conditions in Definition~\ref{D:Chap},  the vector field $X_{nh}$ describing the dynamics
  is equivariant (with respect to the $G$-action on $D$ defined by the restriction of the  tangent lifted action of $G$ to $TQ$)  and the system
admits a $G$-reduction. The reduced dynamics is described by the flow of the \defn{reduced  vector field} 
$\bar X_{nh}$
on the orbit space $D/G$. 
 For a Chaplygin system, the reduced phase space $D/G$ is naturally identified with the tangent bundle
 $TS$, or cotangent bundle $T^*S$, using the (reduced) Legendre transform.
The  reduced equations of motion have the form of a mechanical system on the shape space $S$ subject to gyroscopic forces. 
Geometrically, they may be written in almost symplectic form, i.e. in Hamiltonian-like form, 
${\bf i}_{\bar X_{nh}} \Omega_{nh} =dH$,
where $H$ 
is the reduced Hamiltonian (energy),  but where the non-degenerate 2-form $\Omega_{nh}$ on $T^*S$ fails, in general,  to be closed.
See section~\ref{SS:reduced-eqns} below or references~\cite{EhlersKoiller,FedJov,HochGN, GNMarr2018} for more details.

We now  recall  the definition of the {\em gyroscopic tensor}  from Garc\'ia-Naranjo and Marrero~\cite{GNMarr2018}.
To do this, we first note that the kinetic energy metric defines the orthogonal decomposition $TQ = D\oplus D^\perp$.
We shall denote by 
\begin{equation*}
\mathcal{P}: TQ\to D,
\end{equation*}
the bundle projection associated to such decomposition. Next we note that, as was first pointed out by Koiller~\cite{Koi}, the
second and  third conditions in Definition~\ref{D:Chap} imply that the fibres of $D$ are the horizontal spaces of a principal connection
on the principal bundle $\pi:Q\to Q/G=S$. As is well known, corresponding to such principal connection, there is a well defined \defn{horizontal lift} that associates  to any vector field $ Y\in \frak{X}(S)$ an equivariant vector field 
$\mbox{hor}( Y)\in \frak{X}(Q)$ taking values on $D$, and that 
is $\pi$-related to $ Y$. We are now ready to present:

\begin{definition}
\label{D:defGyrTensor}
The  \defn{gyroscopic tensor}  $\mathcal{T}$  is the $(1,2)$ skew-symmetric tensor field on 
$S$ determined by assigning to  the vector fields 
$Y, Z\in \frak{X}(S)$,  the vector field $\mathcal{T}(Y,Z)\in \frak{X}(S)$, given by
\begin{equation}\label{eq:gyroscopic-tensor}
\mathcal{T}(Y,Z)(s)=(T_q\pi) \left ( \mathcal{P} \left [ \mbox{hor}(Y) \, , \,  \mbox{hor}(Z)   \right ] (q)\right ) - [Y,Z](s),
\end{equation}
for $s \in S$, and where $q \in Q$ is any point such that $\pi(q) = s$, and where $[\cdot, \cdot ]$ denotes the Jacobi-Lie bracket
of vector fields.
\end{definition}

That $\mathcal{T}$  is a well-defined  $(1,2)$ tensor field on $S$ is shown in~\cite[Proposition 3.4]{GNMarr2018}. It is also 
shown in this reference that the gyroscopic tensor $\mathcal{T}$ coincides with other tensor fields that had been considered 
before by Koiller~\cite{Koi} and Cantrijn et al~\cite{CaCoLeMa}.

\subsection{Local expressions for the gyroscopic tensor}
\label{SS:local-expressions}


Let $s=(s^1, \dots, s^r)$ be local coordinates on the shape space $S$. Then we may write
\begin{equation*}
\mathcal{T} \left ( \frac{\partial}{\partial s^i} ,  \frac{\partial}{\partial s^j}\right )  = \sum_{k=1}^r C_{ij}^k  \frac{\partial}{\partial s^k},
\end{equation*}
for certain $s$-dependent coefficients $C_{ij}^k$, which, in view of the skew-symmetry of $\mathcal{T}$, are skew-symmetric
with respect to the lower indices, i.e. $C_{ij}^k=-C_{ji}^k$. Following the terminology introduced in 
Garc\'ia-Naranjo~\cite{LGN18}, we refer to 
$C_{ij}^k$ as the \defn{gyroscopic coefficients}.

Fix $i,j\in \{1,\dots r\}$. The gyroscopic coefficients $C_{ij}^k$, $k=1, \dots, r$, may be computed in practice by 
solving the following linear system of equations:
\begin{equation}
\label{eq:gyr-coeff}
\sum_{k=1}^r K_{kl} C_{ij}^k = \llangle [h_i,h_j], h_l \rrangle, \qquad l=1,\dots, r,
\end{equation}
where we recall that  $\llangle \cdot , \cdot \rrangle$ is the kinetic energy metric on $Q$, and we have denoted
\begin{equation}
\label{eq:local-S-metric}
h_k:= \mbox{hor}\left (  \frac{\partial}{\partial s^k} \right ), \qquad K_{kl}:= \llangle h_k , h_l \rrangle, \qquad k,l=1,\dots, r.
\end{equation}
Note that the matrix $K_{kl}$ is invertible by linear independence of $\{h_1, \dots , h_l\}$.
 That~\eqref{eq:gyr-coeff}
holds is a direct consequence of the definition~\eqref{eq:gyroscopic-tensor} of the gyroscopic tensor since
 $\left [ \frac{\partial}{\partial s^i} ,  \frac{\partial}{\partial s^j} \right ]=0$.

\subsection{The reduced equations of motion}
\label{SS:reduced-eqns}

We now present the reduced equations of motion  of  a Chaplygin system. 
 Our exposition mainly follows 
Garc\'ia-Naranjo~\cite{LGN18}.

The distribution $D$ interpreted as a principal connection on the principal bundle $\pi:Q\to S$, 
induces a Riemannian metric on $S$ that will be denoted by  $\llangle \cdot , \cdot \rrangle^S$. 
For $v_1, v_2 \in T_sS$ it is defined by
\begin{equation}
\label{eq:metric-on-S}
\llangle v_1, v_2 \rrangle^S_s :=\llangle  \mbox{hor}_q(v_1), \mbox{hor}_q(v_2) \rrangle_q, \qquad q\in \pi^{-1}(s),
\end{equation}
and is locally given by $\sum_{i,j=1}^r K_{ij} ds^i \otimes ds^j$ with $K_{kl}$ 
defined by~\eqref{eq:local-S-metric}.
Similarly, the invariance of the potential energy $U$ induces a reduced potential
$U_S\in C^\infty(S)$ such that  $U=U_S\circ \pi$. Therefore, there is a well defined \defn{reduced Lagrangian} 
$\mathcal{L}:TS\to \R$,
of mechanical type, defined by
\begin{equation}
\label{eq:reduced-Lagrangian}
\mathcal{L}(s,\dot s) =\frac{1}{2}\llangle \dot s, \dot s \rrangle^S_s - U_S(s).
\end{equation}
Locally we have $\mathcal{L}(\dot s, s) = \frac{1}{2}\sum_{k,l=1}^r K_{kl}\dot s^k\dot s^l - U_S(s)$. 
As was announced above, the  reduced equations of motion take the form of a  mechanical system on $S$
which is subject to {\em gyroscopic forces}. These may be written in terms of the gyroscopic coefficients $C_{ij}^k$
as  (see e.g.~\cite{LGN18}):
\begin{equation}
\label{eq:reduced-Lag-form}
\frac{d}{dt} \left ( \frac{\partial \mathcal{L}}{\partial \dot s^i} \right ) - \frac{\partial \mathcal{L}}{\partial s^i}
=-\sum_{j,k=1}^rC_{ij}^k \dot s^j  \frac{\partial \mathcal{L}}{\partial \dot s^i}, \qquad i=1,\dots, r.
\end{equation}
Now  use the standard \defn{Legendre transformation}:
\begin{equation}
\label{eq:LegTransf}
p_i=\frac{\partial \mathcal{L}}{\partial \dot s^i} ,  \qquad i=1,\dots, r,
\end{equation}
and define the \defn{reduced Hamiltonian} $H:T^*S\to \R$
\begin{equation}
\label{eq:HamDef}
 H(s,p):=\sum_{j=1}^rp_j  \frac{\partial \mathcal{L}}{\partial \dot s^j}  - \mathcal{L} = \sum_{k,l=1}^r K^{kl}p_kp_l +U_S(s),
\end{equation}
where $K^{kl}$ are the entries of the inverse matrix of $K_{kl}$. It is a standard
exercise to show that  Eqs.~\eqref{eq:reduced-Lag-form} are equivalent to the 
 following first order system on $T^*S$:  
\begin{equation}
\label{eq:motionT*S}
\dot s^i = \frac{\partial H}{\partial p_i}, \qquad \dot
 p_i = -\frac{\partial H}{\partial s^i} -\sum_{j,k=1}^rC_{ij}^k p_k \frac{\partial H}{\partial 
 p_j}, \qquad i=1,\dots, r.
\end{equation}
Eqs.~\eqref{eq:motionT*S} give the local expression of the reduced vector field $\bar X_{nh}$ on
$T^*S$. As mentioned above, this vector field satisfies  ${\bf i}_{\bar X_{nh}} \Omega_{nh} =dH$ 
where the almost-symplectic 2-form
\begin{equation*}
\Omega_{nh}=\sum_{j=1}^r ds^j \wedge dp_j + \sum_{i<j} \sum_{k=1}^r C_{ij}^kp_k \, ds^i\wedge ds^j.
\end{equation*}
We refer the reader to~\cite{GNMarr2018} for an intrinsic construction of $\Omega_{nh}$ using the gyroscopic tensor. A different,
yet equivalent, approach to the construction of $\Omega_{nh}$ is taken in~\cite{EhlersKoiller}.

\subsection{$\phi$-simple Chaplygin systems, measure preservation and Hamiltonisation}

In this subsection we recall the recent results on Chaplygin Hamiltonisation from Garc\'ia-Naranjo~\cite{LGN18} 
and Garc\'ia-Naranjo and Marrero~\cite{GNMarr2018}\footnote{As shown in~\cite{GNMarr2018}, these results are equivalent
to certain sufficient conditions for Hamiltonisation given first by Stanchenko~\cite{Stanchenko} and Cantrijn et al.~\cite{CaCoLeMa}.}.
We begin with the following:

\begin{definition}[\cite{GNMarr2018}]
\label{D:sigma-gyro}
 A non-holonomic Chaplygin system is said to be \defn{$\phi$-simple}  if  there exists a function $\phi \in C^\infty(S)$ such that the gyroscopic 
 tensor ${\mathcal T}$ satisfies
 \begin{equation}\label{Ham-condition}
{\mathcal T}(Y, Z) =Z[\phi]Y - Y[\phi]Z, 
\end{equation}
for all $ Y, Z \in {\frak X}(S)$.
\end{definition}

The class of $\phi$-simple Chaplygin systems is quite special. As it turns out,  their reduced equations on $T^*S$
 always possess an invariant measure and 
allow a Hamiltonisation by Chaplygin's reducing multiplier method. 
More precisely:

\begin{theorem}[\cite{GNMarr2018}] 
\label{T:phi-simple}
\begin{enumerate}
\item The reduced equations of  motion~\eqref{eq:motionT*S} of a $\phi$-simple Chaplygin system possess
 the invariant measure $\mu = \exp (\sigma) \, \nu$, where $\nu$ is the Liouville measure
on $T^*S$ and $\sigma=(r-1)\phi$. 
\item The reduced equations of motion~\eqref{eq:motionT*S} of a $\phi$-simple Chaplygin system become Hamiltonian after the time reparametrisation  $dt =\exp(-\phi (s))\, d\tau$.
\end{enumerate}
\end{theorem}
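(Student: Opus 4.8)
The plan is to work entirely in the local coordinates $(s,p)$ on $T^*S$, using the explicit form of the reduced vector field $\bar X_{nh}$ given by Eqs.~\eqref{eq:motionT*S}, and to exploit the $\phi$-simplicity hypothesis by writing the gyroscopic coefficients explicitly. Indeed, setting $Y=\partial/\partial s^i$, $Z=\partial/\partial s^j$ in~\eqref{Ham-condition} and comparing with the definition of the $C_{ij}^k$, the $\phi$-simplicity condition is equivalent to the pointwise identity
\begin{equation*}
C_{ij}^k = \delta_j^k \, \frac{\partial \phi}{\partial s^i} - \delta_i^k \, \frac{\partial \phi}{\partial s^j}, \qquad i,j,k=1,\dots,r.
\end{equation*}
I would record this as the first step, since both parts of the theorem follow by substituting it into the equations of motion.

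For part (i), I would compute the divergence of $\bar X_{nh}$ with respect to the Liouville volume $\nu = ds^1\wedge\cdots\wedge ds^r\wedge dp_1\wedge\cdots\wedge dp_r$. The Hamiltonian part $\sum_i(\partial H/\partial p_i)\,\partial/\partial s^i - (\partial H/\partial s^i)\,\partial/\partial p_i$ is divergence-free, so only the gyroscopic term $-\sum_{j,k}C_{ij}^k p_k (\partial H/\partial p_j)\,\partial/\partial p_i$ contributes, giving $\operatorname{div}_\nu \bar X_{nh} = -\sum_{i,j,k} C_{ij}^k p_k \,\partial^2 H/\partial p_i\partial p_j$. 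Substituting the explicit $C_{ij}^k$ from the first step, the term with $\delta_i^k$ vanishes by skew-symmetry of $\partial^2 H/\partial p_i\partial p_j$ against the (here irrelevant) index structure — more precisely one uses that $H$ is quadratic-plus-potential in $p$, so $\partial^2 H/\partial p_i\partial p_j = 2K^{ij}$ is symmetric — leaving $\operatorname{div}_\nu \bar X_{nh} = -\sum_{j,k} (\partial\phi/\partial s^k) p_k \cdot 2K^{kj}\cdot(\text{something})$; after careful bookkeeping this collapses to $-(r-1)\,\bar X_{nh}[\phi]$, because $\phi$ depends only on $s$ and $\dot s^k = \partial H/\partial p_k = 2\sum_l K^{kl}p_l$. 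One then checks that $\mu = e^{(r-1)\phi}\nu$ is invariant iff $\operatorname{div}_\nu \bar X_{nh} + \bar X_{nh}[(r-1)\phi] = 0$, which is exactly what we have shown. The main obstacle here is the index gymnastics: getting the factor $(r-1)$ rather than $r$ or $1$ requires treating the $k=i$ and $k=j$ cases in $C_{ij}^k$ separately and noticing that summing the $\delta_j^k$ term over $j$ with the contracted index produces exactly $r-1$ surviving contributions after one is absorbed. I would carry this out once, carefully, and present the result.

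For part (ii), I would introduce the new time $\tau$ via $dt = e^{-\phi(s)}\,d\tau$, so that $d/d\tau = e^{-\phi}\,d/dt$, and rescale the momenta by $P_i := e^{\phi} p_i$ — this is the momentum rescaling intrinsic to Chaplygin's method. Writing the equations~\eqref{eq:motionT*S} in the variables $(s,P,\tau)$ and defining the rescaled Hamiltonian $\widetilde H(s,P) := e^{\phi(s)}\big(\sum_{k,l}K^{kl}e^{-2\phi}P_kP_l + U_S\big)$ — or more cleanly a conformally rescaled metric Hamiltonian — one checks by direct substitution that the gyroscopic terms, which carry the factors $C_{ij}^k = \delta_j^k\phi_i - \delta_i^k\phi_j$, are precisely cancelled by the derivatives of $\phi$ produced when differentiating $e^{\pm\phi}$. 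Concretely, $\dot P_i = e^{-\phi}\frac{d}{dt}(e^\phi p_i) = e^{-\phi}(e^\phi \dot p_i + e^\phi \phi_k \dot s^k p_i)$, and the new term $\phi_k\dot s^k p_i = (\sum_k \phi_k \partial H/\partial p_k)p_i$ exactly matches one of the two pieces of $-\sum_{j,k}C_{ij}^k p_k\partial H/\partial p_j$; the remaining piece $-\sum_k \phi_i p_k \partial H/\partial p_k$ is $-\phi_i$ times the same scalar and gets absorbed into $-\partial\widetilde H/\partial s^i$ through the $s$-dependence of the conformal factor. I expect this cancellation to be the cleanest part once the correct rescaling is in hand; the only delicate point is verifying that the resulting system is genuinely canonical, i.e. that $\dot s^i = \partial\widetilde H/\partial P_i$ and $\dot P_i = -\partial\widetilde H/\partial s^i$ with no leftover terms, which amounts to the algebraic identity that the conformal factor $e^{-\phi}$ applied to the metric produces exactly the gyroscopic connection coefficients $C_{ij}^k$ — the converse direction of the computation used classically by Stanchenko and Chaplygin. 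Since the excerpt has already noted that $\phi$-simplicity is equivalent to those classical sufficient conditions, I would either invoke that equivalence or, preferably for self-containedness, present the one-line verification of the canonical form directly.
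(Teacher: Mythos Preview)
The paper does not actually prove Theorem~\ref{T:phi-simple}: it is quoted from \cite{GNMarr2018}, and the surrounding text only offers two equivalent \emph{interpretations} of part~(ii) --- the conformal-symplectic viewpoint and the momentum-rescaling viewpoint (the latter being exactly the route you take). So your proposal is not competing with a proof in this paper; it is supplying one. Your overall strategy --- write $C_{ij}^k$ explicitly from~\eqref{Ham-condition} and substitute into~\eqref{eq:motionT*S} --- is the right one, and for part~(ii) it coincides with the paper's second interpretation leading to~\eqref{eq:conf-Hamiltonian}. That said, there are two concrete errors you should fix.

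First, your formula for $C_{ij}^k$ has the wrong sign. From $\mathcal T(\partial_i,\partial_j)=\partial_j[\phi]\,\partial_i-\partial_i[\phi]\,\partial_j$ one reads off
\[
C_{ij}^k \;=\; \phi_j\,\delta_i^k \;-\; \phi_i\,\delta_j^k,
\]
not $\delta_j^k\phi_i-\delta_i^k\phi_j$. This sign propagates through everything, so fix it at the source.

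Second, and more substantively, your divergence computation in part~(i) has a gap. You write $\operatorname{div}_\nu \bar X_{nh}=-\sum_{i,j,k}C_{ij}^k p_k\,\partial^2 H/\partial p_i\partial p_j$, but that expression is identically zero (it contracts $C_{ij}^k$, antisymmetric in $i,j$, against the symmetric Hessian of $H$). The nonvanishing contribution comes from the term you omitted, where $\partial/\partial p_i$ hits the explicit factor $p_k$:
\[
\operatorname{div}_\nu \bar X_{nh}
= -\sum_{i,j}C_{ij}^i\,\frac{\partial H}{\partial p_j}
= -\sum_{i,j}\bigl(\phi_j\,\delta_i^i-\phi_i\,\delta_j^i\bigr)\frac{\partial H}{\partial p_j}
= -(r-1)\sum_j \phi_j\,\dot s^j
= -(r-1)\,\bar X_{nh}[\phi],
\]
which is where the factor $r-1$ genuinely appears. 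Once you insert this correction your conclusion for~(i) stands. For part~(ii) your outline is correct; just take $\tilde H(s,\tilde p)=H(s,e^{-\phi}\tilde p)$ (no extra overall factor $e^{\phi}$), as in the paper's discussion above~\eqref{eq:conf-Hamiltonian}, and the cancellation you describe goes through cleanly.
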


The implication of item (i) of the theorem is clear. If the Chaplygin system under consideration is $\phi$-simple, then 
Eqs.~\eqref{eq:motionT*S} preserve the volume form on $T^*S$ whose local expression is
\begin{equation*}
\mu= \exp \left ( (1-r)\phi \right ) \, ds^1 \wedge \cdots \wedge ds^r \wedge dp_1 \wedge \cdots \wedge dp_r.
\end{equation*}

One interpretation of item (ii) 
 - followed in~\cite{GNMarr2018} -
 is that $\phi$-simplicity implies that the 2-form $\Omega_{nh}$ is conformally symplectic  with conformal factor $\exp (\phi(s))$. In other words, 
 the 2-form $\overline{ \Omega} := \exp (\phi(s))\Omega_{nh}$ is closed and hence symplectic. Hence, the rescaled
 vector field $Z:= \exp (-\phi(s))\bar X_{nh}$ satisfies 
 ${\bf i}_{Z} \overline{ \Omega}=dH$ and is therefore Hamiltonian (with respect to the symplectic
 structure $\overline{ \Omega}$). 
 
 Another, equivalent, interpretation of item (ii) of Theorem~\ref{T:phi-simple} - followed in~\cite{LGN18} - is obtained by defining the momentum rescaling:
 \begin{equation*}
 \tilde p_i= \exp(\phi (s))\, p_i, \quad i=1,\dots, r,
\end{equation*}
 and writing the reduced Hamiltonian in the new variables
 \begin{equation*}
\tilde H(s,\tilde p)= H(s,\exp(-\phi(s)) \tilde p).
\end{equation*}
Item (ii) of Theorem~\ref{T:phi-simple} states that, for a $\phi$-simple Chaplygin system, 
 Eqs.~\eqref{eq:motionT*S} are written in 
the $(s, \tilde p)$  variables in \defn{conformally Hamiltonian} form 
 \begin{equation}
 \label{eq:conf-Hamiltonian}
\frac{d s^i}{dt} 
= \exp(\phi (s))\ \frac{\partial \tilde H}{\partial \tilde  p_i}, \qquad \frac{d \tilde p_i}{dt}   = -\exp(\phi (s))\frac{\partial \tilde H}{\partial s^i}, \qquad i=1,\dots, r.
\end{equation}
The conformal factor $\exp(\phi (s))$ may be absorbed in the time reparametrisation $dt =\exp(-\phi (s))\, d\tau$ leading to the Hamiltonian 
system:
 \begin{equation*}
\frac{d s^i}{d\tau} 
= \frac{\partial \tilde H}{\partial \tilde  p_i}, \qquad  \frac{d \tilde p_i}{d\tau} = -\frac{\partial \tilde H}{\partial s^i}, \qquad i=1,\dots, r.
\end{equation*}

\section{Noether's Theorem for $\phi$-simple Chaplygin systems}
\label{S:Noether}

We now show  how additional - sometimes called internal - symmetries of $\phi$-simple Chaplygin systems lead to first integrals. This is a 
consequence of the  conformally Hamiltonian structure of their reduced equations. 
As we show below, the conserved quantities are simply a  rescaling by the conformal factor 
of the standard momentum map for Hamiltonian systems.

We begin by recalling  some standard notation. 
Suppose that the Lie group $A$, with Lie algebra  $\frak{a}$, acts 
 on $S$. 
 For $\xi\in \frak{a}$ we denote by $\xi_S \in \frak{X}(S)$ the \defn{infinitesimal 
generator} of $\xi$. Namely, $\xi_S$ is the vector field on $S$ defined by
\begin{equation*}
\xi_S(s) := \left . \frac{d}{dt} \right |_{t=0}  \exp(t\xi)\cdot s \in T_sS.
\end{equation*}

\begin{theorem}
\label{Th:Noether}
Consider a $\phi$-simple Chaplygin system and suppose that the  Lie group $A$ acts  on the 
shape space $S$ and leaves $\phi$ invariant.
\begin{enumerate}
\item If the reduced Lagrangian $\mathcal{L}:TS\to \R$ defined by~\eqref{eq:reduced-Lagrangian}
 is invariant under the tangent lifted action of $A$ to $TS$,
  then  the \defn{rescaled tangent bundle momentum map}
\begin{equation}
\label{eq:tangentbundlemommap}
  \mathcal{J}:TS\to  \frak{a}^*, \quad \mbox{defined by} \quad  \mathcal{J}(s,\dot s) ( \xi)  =  \exp(\phi(s)) \llangle \dot s  , \xi_S(s)  \rrangle^S_s,  \qquad \xi\in  \frak{a},
\end{equation}
is constant along the flow of the reduced equations~\eqref{eq:reduced-Lag-form}. 
\item If the reduced Hamiltonian $H:T^*S\to \R$ defined by~\eqref{eq:HamDef}
 is invariant under the cotangent lifted action of $A$ to $T^*S$,
  then  the \defn{rescaled cotangent bundle momentum map}
\begin{equation*}
  \mathcal{J}:T^*S\to  \frak{a}^*, \quad \mbox{defined by} \quad  \mathcal{J}(s,p)  =  
  \exp(\phi(s)) \langle p ,\xi_S(s) \rangle,  \qquad \xi\in  \frak{a},
\end{equation*}
where $\langle \cdot , \cdot \rangle$ denotes the duality pairing between $T^*S$ and $TS$, 
is constant along the flow of the reduced equations~\eqref{eq:motionT*S}. 

\item Items (i) and (ii) are equivalent via the Legendre transformation~\eqref{eq:LegTransf}.
\end{enumerate}
\end{theorem}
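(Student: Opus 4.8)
The plan is to derive the conservation laws from the classical Noether theorem by passing to the rescaled variables furnished by Theorem~\ref{T:phi-simple}(ii), and then to reduce items (i) and (iii) to item (ii) via the Legendre transformation. For the main step, write $\tilde p_i=\exp(\phi(s))\,p_i$ and let $\Psi\colon T^*S\to T^*S$, $\Psi(s,p)=(s,\exp(\phi(s))p)$, be the momentum rescaling, so that $\tilde H=H\circ\Psi^{-1}$. The first thing I would establish is that, written in the chart $(s,\tilde p)$, the rescaled form $\overline\Omega=\exp(\phi)\,\Omega_{nh}$ is the \emph{canonical} symplectic form $\sum_j ds^j\wedge d\tilde p_j$. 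This is a short computation: expanding $d\tilde p_j=d(\exp(\phi)p_j)$ produces, besides $\exp(\phi)\sum_j ds^j\wedge dp_j$, a term $\exp(\phi)\sum_{i<j}\big(p_i\,\partial\phi/\partial s^j-p_j\,\partial\phi/\partial s^i\big)\,ds^i\wedge ds^j$, and this matches the gyroscopic correction $\exp(\phi)\sum_{i<j}\sum_k C_{ij}^k p_k\,ds^i\wedge ds^j$ in $\overline\Omega$ precisely because the coordinate form of the $\phi$-simplicity condition~\eqref{Ham-condition} reads $C_{ij}^k=\delta_i^k\,\partial\phi/\partial s^j-\delta_j^k\,\partial\phi/\partial s^i$. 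Hence, in the variables $(s,\tilde p)$, the reduced equations take the conformally Hamiltonian form~\eqref{eq:conf-Hamiltonian} with respect to the canonical structure and the Hamiltonian $\tilde H$.

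To prove item (ii) I would next check that $\Psi$ is equivariant for the cotangent-lifted $A$-actions on source and target. Since the cotangent lift of a base diffeomorphism acts on the fibre by the inverse transpose of its Jacobian, and since $\exp(\phi(s))$ is unchanged along $A$-orbits by hypothesis, a one-line computation gives $\Psi\circ(T^*g)=(T^*g)\circ\Psi$ for every $g\in A$; hence in the chart $(s,\tilde p)$ the $A$-action is again a cotangent lift, and $\tilde H=H\circ\Psi^{-1}$ is $A$-invariant because $H$ is. The cotangent-lift momentum map for the canonical structure is the standard one, $J_\xi(s,\tilde p)=\langle\tilde p,\xi_S(s)\rangle$, and $A$-invariance of $\tilde H$ gives $\{J_\xi,\tilde H\}=\xi_S[\tilde H]=0$. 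A conformally Hamiltonian vector field (such as the one governing~\eqref{eq:conf-Hamiltonian}) annihilates every function Poisson-commuting with its Hamiltonian, so $J_\xi$ is constant along the reduced equations~\eqref{eq:motionT*S}; rewritten in the variables $(s,p)$ this reads $J_\xi=\exp(\phi(s))\langle p,\xi_S(s)\rangle=\mathcal{J}(s,p)(\xi)$, which is item (ii). (Equivalently one may absorb the conformal factor into the reparametrisation $dt=\exp(-\phi)\,d\tau$ of Theorem~\ref{T:phi-simple}(ii), apply the textbook Hamiltonian Noether theorem in the time $\tau$, and use that a reparametrisation does not alter first integrals.)

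For item (iii) I would invoke the standard facts that the Legendre transformation~\eqref{eq:LegTransf}, $(s,\dot s)\mapsto(s,p)$ with $p_i=\sum_j K_{ij}\dot s^j$, intertwines the tangent-lifted $A$-action on $TS$ with the cotangent-lifted one on $T^*S$, that invariance of $\mathcal{L}$ — which forces $U_S$ and the metric $\llangle\cdot,\cdot\rrangle^S$, hence also $H$, to be $A$-invariant — corresponds under it to invariance of $H$, and that it carries the reduced Lagrangian equations~\eqref{eq:reduced-Lag-form} to~\eqref{eq:motionT*S} (the last point being already recorded in Section~\ref{SS:reduced-eqns}). Since $\langle p,\xi_S(s)\rangle=\sum_{i,j}K_{ij}\dot s^j\xi_S^i(s)=\llangle\dot s,\xi_S(s)\rrangle^S_s$ under this transformation, the cotangent-bundle momentum map of item (ii) pulls back exactly to the tangent-bundle momentum map~\eqref{eq:tangentbundlemommap}; this establishes (iii), and item (i) is then immediate from (ii) and (iii). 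Alternatively, item (i) can be obtained by running the argument above on the Lagrangian side, and item (ii) can also be verified directly by differentiating $\mathcal{J}(\xi)$ along~\eqref{eq:motionT*S} and using $C_{ij}^k=\delta_i^k\,\partial\phi/\partial s^j-\delta_j^k\,\partial\phi/\partial s^i$, the identity $\xi_S[\phi]=0$, and the infinitesimal invariance of $H$.

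The only delicate points are the two interplays isolated above: that the coordinate version of~\eqref{Ham-condition} is exactly what turns $\overline\Omega$ into the canonical form in the rescaled variables, and that $A$-invariance of $\phi$ is exactly what makes the rescaling $\Psi$ intertwine the cotangent lifts, so that the $A$-action stays a cotangent lift and its momentum map is the familiar one. Once these two facts are secured, the statement is a routine consequence of the Hamiltonian Noether theorem together with the behaviour of first integrals under conformal rescaling of the vector field, and no genuinely new computation is required.
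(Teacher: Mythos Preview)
Your argument is correct, and it takes a genuinely different route from the paper's. The paper proves item~(ii) by a direct computation: it first rewrites the reduced equations~\eqref{eq:motionT*S} using the $\phi$-simplicity condition as
\[
\frac{d}{dt}\bigl(\exp(\phi)p_i\bigr)=-\exp(\phi)\Bigl(\partial H/\partial s^i+(\partial\phi/\partial s^i)\sum_j p_j\,\partial H/\partial p_j\Bigr),
\]
then differentiates $\exp(\phi)\sum_i\xi^i p_i$ along the flow and shows the result vanishes using $\xi_S[\phi]=0$ and the infinitesimal invariance $\xi_S^{T^*S}[H]=0$. Your approach instead leverages the Hamiltonisation of Theorem~\ref{T:phi-simple}(ii): you identify $\overline\Omega$ with the canonical form in the rescaled chart $(s,\tilde p)$, observe that $A$-invariance of $\phi$ makes the rescaling $\Psi$ equivariant so that the action remains a cotangent lift, and then invoke the standard Hamiltonian Noether theorem for $\tilde H$ together with the trivial fact that conformal rescaling of a vector field preserves first integrals. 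This is more conceptual and explains \emph{why} the conserved quantity is exactly the usual momentum map in the rescaled variables; the paper's approach is more self-contained and avoids appealing to the full Hamiltonisation machinery. You in fact sketch the paper's method yourself in your final alternative remark, so the two are complementary rather than in tension.
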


\begin{proof}
Let $\xi\in \frak{a}$ and suppose that in local coordinates
  $ \xi_S = \sum_{i=1}^r \xi^j(s) \frac{\partial}{\partial s^j}$.
 The rescaled tangent bundle momentum map is locally given by:
\begin{equation*}
 \mathcal{J}(s^i,\dot s^i) ( \xi)  =  \exp(\phi(s))  \sum_{i=1}^r \xi^j(s) \frac{\partial \mathcal{L}}{\partial \dot s^j},
\end{equation*}
which,  in view of  the Legendre transformation~\eqref{eq:LegTransf}, 
coincides with
\begin{equation}
\label{eq:RescaledMomentum}
 \mathcal{J}(s^i, p_i) ( \xi)  =  \exp(\phi(s))  \sum_{i=1}^r \xi^j(s) p_j,
\end{equation}
which is the local expression for the rescaled cotangent bundle momentum map.
So the definitions of $ \mathcal{J}$ in items (ii) and (iii) 
are  indeed matched by the Legendre transformation. The equivalence
between the invariance assumptions on $L$ and $H$ - with respect to the appropriate lifted action
of $A$ - is quite standard (see~e.g.~\cite{MaRa}).  We complete the proof by showing 
that, under the cotangent lift 
invariance assumption on $H$, $\mathcal{J}$ given by~\eqref{eq:RescaledMomentum} is indeed a first integral of Eqs.~\eqref{eq:motionT*S}. We begin by using the 
assumption of $\phi$-simplicity  to rewrite Eqs.~\eqref{eq:motionT*S} as
\begin{equation*}
 \label{eq:conf-Hamiltonianbis}
\frac{d s^i}{dt}
= \frac{\partial H}{\partial   p_i}, \qquad \frac{d  }{dt} \left (\exp(\phi) p_i  \right )= -\exp(\phi) \left ( 
\frac{\partial  H}{\partial s^i} +
\frac{\partial  \phi}{\partial s^i}\sum_{j=1}^r\frac{\partial  H}{\partial p_j}p_j \right ), \qquad i=1,\dots, r.
\end{equation*}
Indeed, a calculation based on the chain rule shows that the above system is equivalent to 
Eqs.~\eqref{eq:conf-Hamiltonian}. Therefore, using the above equations, we compute
\begin{equation}
\label{eq:NoetherThaux0}
\frac{d }{dt} \left ( \exp(\phi ) \sum_{i=1}^r\xi^i p_i  \right ) = - \exp(\phi )  \left [ 
\sum_{i=1}^r\left ( \frac{\partial  H}{\partial s^i}\xi^i - p_i\sum_{j=1}^r
\frac{\partial \xi^i}{\partial s^j} \frac{\partial H}{\partial   p_j}  \right ) +
  \left ( \sum_{j=1}^r\frac{\partial  H}{\partial p_j}p_j   \right )
 \left ( \sum_{i=1}^r \frac{\partial  \phi}{\partial s^i}\xi^i \right ) \right ].
\end{equation}

On the other hand, the $A$-invariance of $\phi$ implies
\begin{equation}
\label{eq:NoetherThaux1}
\xi_S[\phi]= \sum_{i=1}^r \xi^i \frac{\partial \phi }{\partial s^i} =0.
\end{equation}
Moreover, recall (see~e.g.~\cite{MaRa}) 
that the cotangent lift of $ \xi_S = \sum_{i=1}^r \xi^j(s) \frac{\partial}{\partial s^j}$ is
the vector field $ \xi_S^{T^*S}$ on $T^*S$ expressed in bundle coordinates as 
$ \xi_S^{T^*S} = \sum_{i=1}^r \xi^i \frac{\partial }{\partial s^i} -\sum_{i,j=1}^r
\frac{\partial \xi^i}{\partial s^j} p_i \frac{\partial }{\partial   p_j}$. Therefore, the assumption
that $H$ is invariant under the cotangent lift of $A$ to $T^*S$ implies
\begin{equation}
\label{eq:NoetherThaux2}
  \xi_S^{T^*S} [H]= \sum_{i=1}^r \xi^i \frac{\partial H }{\partial s^i} -\sum_{i,j=1}^r
\frac{\partial \xi^i}{\partial s^j} p_i \frac{\partial H}{\partial   p_j} =0.
\end{equation}
Substitution  of~\eqref{eq:NoetherThaux1} and~\eqref{eq:NoetherThaux2} into~\eqref{eq:NoetherThaux0}
proves the result.

\end{proof}

\section{The rubber Routh sphere} 
\label{S:Example}

Routh  considered the motion of a sphere, whose distribution of mass is axially symmetric, that rolls without slipping on the plane.
Here we enforce an additional {\em rubber} constraint that forbids spinning and consider the multi-dimensional generalisation of the
system. The 3D version of the problem was already considered by Borisov and coauthors in the works~\cite{BorMam2008, BorMamRubber} which treat
 more general problems of 3D bodies that roll without slipping or spinning over a surface.

\subsection{The 3D case}
\label{SS:3D}

Consider a sphere  that rolls without slipping or spinning on the plane. 
The orientation of the sphere is determined by an orthogonal matrix $R\in \SO(3)$ that relates a {\em body} fixed frame  
$\{E_1,E_2,E_3\}$  to an inertial or {\em space} 
frame $\{e_1,e_2,e_3\}$. We assume that space frame is chosen in such a way that the plane where the rolling takes place
is spanned by $e_1$ and $e_2$, and will denote 
the space coordinates of the geometric centre $O$ by
$\boldsymbol x=(x_1,x_2,b)^t$, where the constant
$b>0$ is the sphere's radius. The position and orientation of the sphere is hence completely determined by 
the pair $(R,(x_1,x_2))\in \SO(3)\times \R^2$ so the configuration space of the system is $Q=\SO(3)\times \R^2$.
To simplify the exposition, we will denote an element  $q\in Q$ as a pair $q=(R,x)\in \SO(3)\times \R^3$ with $x_3=b$. This amounts
to the identification of $Q$ with the embedded submanifold of  $\SO(3)\times \R^3$ defined by the holonomic
constraint  $x_3=b$.
%
%
%
%

Denote by $\boldsymbol{\omega}=(\omega_1,\omega_2,\omega_3)^t\in \R^3$ the angular velocity vector of the
sphere written in the {\em space} frame, and by  $\boldsymbol{\Omega}=R^{-1}\boldsymbol{\omega}=
(\Omega_1,\Omega_2,\Omega_3)^t\in \R^3$ the  
same  vector written  in the
 {\em body} frame. As is well known, these
  vectors correspond to the  right and left trivialisations of the tangent velocity vector
$\dot R\in T_R\SO(3)$ 
as follows
\begin{equation}
\label{eq:angVelocity3D}
\dot R R^{-1} = \begin{pmatrix} 0 & - \omega_3 & \omega_2 \\ 
 \omega_3 & 0 & -\omega_1 \\
  -\omega_2 & \omega_1 & 0 \end{pmatrix} \in \so (3), \qquad R^{-1}\dot R  = \begin{pmatrix} 0 & - \Omega_3 & \Omega_2 \\ 
 \Omega_3 & 0 & -\Omega_1 \\
  -\Omega_2 & \Omega_1 & 0 \end{pmatrix}  \in \so (3),
\end{equation}
where $\so (3)$, the space of skew-symmetric real $3\times 3$ matrices, is the Lie algebra of $\SO(3)$.

The no-slip rolling constraint  is written as 
\begin{equation}
\label{eq:3Drolling}
\dot x = b \boldsymbol{\omega} \times e_3, \qquad \mbox{or, equivalently,} \qquad 
\dot x  = b R(\Omega \times \gamma),
\end{equation}
where $\dot x=(\dot x_1, \dot x_2, 0)^t$,  $\times$ denotes the vector product in $\R^3$ and $\gamma:=R^{-1}e_3$ denotes the 
so-called {\em Poisson vector} that gives
 body coordinates of the vector $e_3$ that is normal
to the plane where the rolling takes place. 
The sphere is also subject to the no-spin or {\em rubber} constraint
\begin{equation}
\label{eq:3Drubber}
\omega_3 =0,  \qquad \mbox{or, equivalently,} \qquad  (\Omega,  \gamma)=0,
\end{equation}
where $(\cdot ,  \cdot)$ denotes the scalar euclidean product in $\R^3$. The first two components of 
 Eq.~\eqref{eq:3Drolling} together with 
Eq.~\eqref{eq:3Drubber}
define 3 independent nonholonomic constraints that determine a rank 2 distribution $D$ on $Q$.

Inspired by Routh~\cite{Routh}, we  assume that the mass distribution of the sphere is axially symmetric. The  body 
 frame $\{E_1,E_2,E_3\}$ is chosen with   origin 
at the centre of mass $C$ and with  $E_3$  aligned with  the 
axis of symmetry. This choice of body frame implies that  the inertia tensor  of the body is represented by a $3\times 3$ matrix of
the form  $\I=\mbox{diag}(I_1,I_1,I_3)$, with  principal moments of inertia $I_1, I_3>0$. We denote by $\ell$ the distance between $C$ and the 
geometric centre $O$ and assume that the  coordinates of $O$ in the body frame are $(0,0,-\ell)^t$, see Figure~\ref{F:Routh-sphere}. 
The space
coordinates of $C$ are hence given by the vector $\boldsymbol{u}=x+\ell R E_3$. 
Considering that $\| \dot {\boldsymbol{u}} \|^2=\| R^{-1}\dot {\boldsymbol{u}} \|^2$, where $\|\cdot \|$ is the Euclidean norm in 
$\R^3$, the Lagrangian of the 
system $L:TQ\to \R$, given by the kinetic minus the potential energy, may be written as
\begin{equation}
\label{eq:Lag3D}
L(R,\Omega, x, \dot x)=\frac{1}{2}(\I {\boldsymbol{\Omega}}, \boldsymbol{\Omega}) + \frac{m}{2}\|R^{-1} \dot x + \ell \Omega \times E_3
 \|^2  -m\mathcal{G}\ell \gamma_3 , 
\end{equation}
where  $m$ is the mass of the sphere, $ \mathcal{G}$ is the gravitational constant, and $\gamma_3$ denotes the third component of $\gamma$, i.e. $\gamma_3=(\gamma,E_3)$.

In Eq.~\eqref{eq:Lag3D}, and in what follows, we  write a generic element of $TQ$ as the quadruple 
$(R,\Omega, x, \dot x)\in \SO(3)\times \R^3 \times \R^3 \times \R^3$, which is possible by the identification of
$T\SO(3)$ with $\SO(3)\times \R^3 $ via the left trivialisation, and the  embedding $T\R^2\hookrightarrow \R^3\times \R^3$,
$((x_1,x_2),(\dot x_1,\dot x_2))\mapsto ((x_1,x_2,b),(\dot x_1,\dot x_2,0))$, induced from the holonomic constraint $x_3=b$.

\begin{figure}[h]
\centering
\includegraphics[totalheight=4cm]{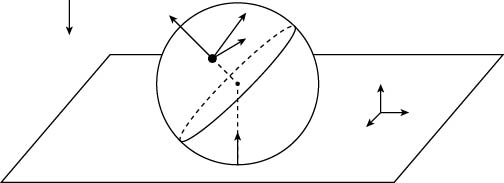}
 \put (-95,38) {$e_1$} \put (-65,36) {$e_2$} \put (-74,57) {$e_3$} \put (-164,60) {$O$}  \put (-189,65) {$C$}  \put (-163,37) {$b$} \put (-162,92) {$E_1$} \put (-177,102) {$E_2$}  \put (-172,70) {$\ell$} 
  \put (-206,105) {$E_3$}  \put (-175,20) {$\gamma$}   \put (-294,102) {$-m\mathcal{G}$}
\caption{Axisymmetric sphere on the plane. $E_3$ points along the axis of symmetry and $C$ is the centre of mass.}
\label{F:Routh-sphere}
\end{figure}

The evolution of the system is clearly independent of horizontal translations and rotations of the space frame about $e_3$.
 This corresponds to a symmetry action of the 
euclidean group $G=\SE(2)$  on the configuration space $Q=\SO(3)\times \R^2$ as we now show. We represent the group $G=\SE(2)$
as the Lie subgroup of $\mathrm{GL}(4,\R)$ consisting of matrices of the form
\begin{equation*}
g=\left ( \begin{array}{ccc|c} & & &  \\ & h & & y \\ & & &  \\ \hline 0 & 0 & 0 & 1\end{array} \right ), \qquad \mbox{where} \qquad 
y =\begin{pmatrix} y_1 \\  y_2 \\ 0   \end{pmatrix} \in \R^3, \qquad \mbox{and} \qquad h=\left ( \begin{array}{c|c} \tilde h &   \begin{array}{c} 0 \\ 0 \end{array}  \\    \hline  \begin{array}{cc} 0& 0 \end{array}
 & 1 \end{array} \right )\in \SO(3), 
\end{equation*}
with $ \tilde h\in \SO(2)$. The action of $g\in \SE(2)$ given  above on an element $(R,x)\in \SO(3)\times \R^3 $ is 
\begin{equation}
\label{eq:action}
g\cdot (R,x)= (hR,hx+y).
\end{equation}
This action restricts to $Q$ since it preserves the holonomic constraint $x_3=b$.

\begin{proposition}
\label{P:ChapSyst}
The problem of the rubber Routh sphere that rolls without slipping or spinning  on the plane 
is a Chaplgyin system with  $G=\SE(2)$ acting on $Q$ via Eq.~\eqref{eq:action}.
\end{proposition}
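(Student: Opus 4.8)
The plan is to check the three conditions of Definition~\ref{D:Chap} directly, together with the freeness, properness and dimension requirements on $G=\SE(2)$. Everything rests on two elementary facts: the kinematic quantities on which the Lagrangian~\eqref{eq:Lag3D} and the constraints~\eqref{eq:3Drolling}--\eqref{eq:3Drubber} depend --- namely $\boldsymbol\Omega=R^{-1}\dot R$, $\gamma=R^{-1}e_3$ and $R^{-1}\dot x$ --- are invariant under the tangent lift of~\eqref{eq:action}, because $\SE(2)$ acts on the $R$-slot by \emph{left} multiplication by a matrix $h$ that moreover fixes $e_3$; and the $\SO(2)$-direction inside $\SE(2)$ produces a body angular velocity parallel to $\gamma$, i.e. the extreme violation of the rubber constraint $(\boldsymbol\Omega,\gamma)=0$.

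I would first clear the bookkeeping. The action~\eqref{eq:action} is free: $g\cdot(R,x)=(R,x)$ forces $hR=R$, hence $h=\mathrm{Id}$ and then $y=0$, so $g=e$. It is proper, since on the $x$-coordinates~\eqref{eq:action} restricts to the standard affine action of $\SE(2)$ on $\R^2$, which is proper. Finally $\dim Q=\dim\SO(3)+2=5$, the three independent constraints~\eqref{eq:3Drolling}--\eqref{eq:3Drubber} determine a distribution $D$ of rank $r=2$, and $\dim\SE(2)=3=5-2=n-r$.

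For condition~(i), the tangent lift of~\eqref{eq:action} sends $\dot R\mapsto h\dot R$ and $\dot x\mapsto h\dot x$, whence $\boldsymbol\Omega\mapsto(hR)^{-1}h\dot R=\boldsymbol\Omega$, $R^{-1}\dot x\mapsto R^{-1}\dot x$ and $\gamma\mapsto(hR)^{-1}e_3=R^{-1}e_3=\gamma$; since $L$ depends on the state only through $\boldsymbol\Omega$, $R^{-1}\dot x$ and $\gamma_3$, it is $G$-invariant, so in particular $G$ acts by isometries of the kinetic energy metric and fixes $U$. For condition~(ii), the no-slip constraint transforms as $\dot x-bR(\boldsymbol\Omega\times\gamma)\mapsto h\bigl(\dot x-bR(\boldsymbol\Omega\times\gamma)\bigr)$ and the rubber constraint $(\boldsymbol\Omega,\gamma)=0$ is manifestly preserved, giving $D_{g\cdot q}=Tg(D_q)$.

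The main point, and the step I expect to require the most care, is condition~(iii). By freeness $\dim(\g\cdot q)=3$, so with $\dim D_q=2=5-3$ it suffices to show $\g\cdot q\cap D_q=\{0\}$. An element of $\se(2)$ is a pair $(\eta,v)$ with $\eta\in\so(3)$ generating a rotation about $e_3$ --- so that $\dot R R^{-1}=\eta$ corresponds to space angular velocity $a\,e_3$ for some $a\in\R$ --- and $v\in\R^2\times\{0\}$; its infinitesimal generator at $q=(R,x)$ is $(\eta R,\eta x+v)$. In trivialised coordinates this orbit direction has body angular velocity $\boldsymbol\Omega=R^{-1}(a\,e_3)=a\,\gamma$, so the rubber constraint forces $a=a|\gamma|^2=0$, i.e. $\eta=0$; the no-slip constraint then gives $\dot x=bR(\boldsymbol\Omega\times\gamma)=0$, so $v=0$. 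Hence the generator vanishes, the splitting holds, and combining this with (i), (ii) and the bookkeeping above proves that the system is Chaplygin with $G=\SE(2)$.
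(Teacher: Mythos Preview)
Your proof is correct and follows essentially the same approach as the paper's: verify invariance of the Lagrangian and constraints via the invariance of $\boldsymbol\Omega$, $R^{-1}\dot x$ and $\gamma$ under the lifted action, and then check the transversality condition~(iii) by examining infinitesimal generators. The only organisational difference is that the paper treats~(iii) by exhibiting an explicit basis $\xi_1,\xi_2,\xi_3$ of $\se(2)$ and observing that each generator violates a constraint (translations violate rolling, the rotation violates the rubber constraint), whereas you take a general element $(\eta,v)\in\se(2)$ and show directly that $\g\cdot q\cap D_q=\{0\}$; your version is arguably cleaner since it handles linear combinations at once, and you also make the freeness and properness checks explicit, which the paper leaves implicit.
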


\begin{remark}
\label{rmk:ChapSyst}
Proposition~\ref{P:ChapSyst} is valid even if the sphere fails to be axially symmetric. In fact, it continues to hold for the problem
of an arbitrary rubber smooth convex body that rolls without slipping or spinning on the plane.
\end{remark}

\begin{proof}
The tangent lifted  action is
\begin{equation*}
g\cdot (R,\Omega,x,\dot x)= (hR,\Omega,hx+y,h\dot x). 
\end{equation*}
Moreover, given that $h^{-1}e_3=e_3$, it follows that the Poisson vector $\gamma$ is invariant. Using this, and the above expression
for the tangent lift, it is immediate to see that  both the rolling~\eqref{eq:3Drolling} and 
rubber~\eqref{eq:3Drubber} constraints are invariant. Similarly, one checks that the kinetic and potential energies of the 
Lagrangian~\eqref{eq:Lag3D} are invariant so the conditions (i) and (ii) in Definition~\ref{D:Chap} hold. In order to check that
condition (iii) in Definition~\ref{D:Chap} also holds, note that the Lie algebra $\se(2)$ in our representation is spanned by
the $4\times 4$ matrices
\begin{equation*}
 \xi_1=   \left ( \begin{array}{c|c}  \begin{array}{ccc}  &  &   \\  & 0 &  \\  &  &  \end{array}  &  \begin{array}{c} 1  \\  0 \\ 0  \end{array} \\  \hline 0 & 0 \end{array} \right ),
 \quad
 \xi_2=   \left ( \begin{array}{c|c}  \begin{array}{ccc}  &  &   \\  & 0 &  \\  &  &  \end{array}  &  \begin{array}{c} 0  \\  1 \\ 0  \end{array} \\  \hline 0 & 0 \end{array} \right ),
 \quad
  \xi_3= \left ( \begin{array}{c|c} \begin{array}{ccc} 0 & -1 & 0  \\ 1 & 0 & 0 \\ 0 & 0 & 0 \end{array} & 0 \\  \hline 0 & 0 \end{array} \right ).
\end{equation*}
The infinitesimal generators of $\xi_1$ and $\xi_2$ are  vector fields  on $Q$ that correspond to pure translations along the $x_1$
and $x_2$ axis respectively, which violate the rolling constraint~\eqref{eq:3Drolling}. On the other hand, 
the infinitesimal generator of $\xi_3$
is a vector field on $Q$ having constant $\omega_3=1$, which violates the rubber constraint~\eqref{eq:3Drolling}.
Hence, the group orbit is transversal to the constraint distribution and, by a dimension count, the  condition (iii) in 
Definition~\ref{D:Chap} is also verified. 
\end{proof}

The shape space $S=(\SO(3)\times \R^2)/\SE(2)$ is diffeomorphic to the two dimensional sphere $\Ss^2$ and the orbit projection
is
\begin{equation}
\label{eq:pi-proj3D}
\pi:\SO(3)\times \R^2 \to \Ss^2, \qquad (R,x)\mapsto \gamma,
\end{equation}
where we recall that $\gamma= R^{-1}e_3\in \R^3$ is the Poisson vector. Note that we realise 
\begin{equation*}
\Ss^2= \left \{ \gamma=(\gamma_1, \gamma_2,\gamma_3)^t \in \R^3 \, : \, \gamma_1^2+\gamma_2^2+\gamma_3^2 =1 \right  \}.
\end{equation*}
It follows from our discussion in section~\ref{S:ChapSyst} 
that the reduced equations of motion are defined on the cotangent bundle $T^*\Ss^2$. 

\begin{theorem}
\label{T:phiSimpleRouth3D}
The problem of the rubber Routh sphere that rolls without slipping or spinning  on the plane 
is $\phi$-simple with $\phi: \Ss^2\to \R$ given by
\begin{equation*}
\phi(\gamma)= -\frac{1}{2}\ln \left ( I_1\gamma_3^2 +I_3(1-\gamma_3^2)+ m(b+\ell \gamma_3)^2 \right ).
\end{equation*}
\end{theorem}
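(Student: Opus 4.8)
The plan is to compute the gyroscopic tensor $\mathcal{T}$ on the shape space $S=\Ss^2$ explicitly and verify that it satisfies condition~\eqref{Ham-condition} with the claimed $\phi$. Since $\dim S = 2$, the skew-symmetric $(1,2)$ tensor $\mathcal{T}$ is essentially determined by a single vector field: for any basis of vector fields $Y,Z$ on a surface, $\mathcal{T}(Y,Z)$ is a multiple of a fixed direction, and condition~\eqref{Ham-condition} becomes a scalar equation relating $\mathcal{T}$ to $d\phi$. Concretely, I would pick convenient vector fields on $\Ss^2$ — e.g.\ the restrictions to $\Ss^2$ of the rotational vector fields $v_i(\gamma) = e_i\times\gamma$, $i=1,2,3$ (only two are independent at each point, and they span $T_\gamma\Ss^2$ whenever $\gamma\neq\pm e_i$) — and evaluate~\eqref{eq:gyr-coeff} for these fields. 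This requires three ingredients: (a) the horizontal lifts $h_i=\mathrm{hor}(v_i)$, which are the vector fields on $Q$ tangent to $D$ that project to $v_i$ under $d\pi$ in~\eqref{eq:pi-proj3D}; (b) the bracket $[h_i,h_j]$ on $Q$; and (c) the kinetic energy metric $\llangle\cdot,\cdot\rrangle$ read off from the Lagrangian~\eqref{eq:Lag3D}.

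First I would compute the horizontal lifts. Since $\pi(R,x)=\gamma=R^{-1}e_3$, a curve through $q=(R,x)$ projecting to a curve with $\dot\gamma = v$ forces $R^{-1}\dot R$ to act appropriately on $e_3$; writing $R^{-1}\dot R = \widehat{\boldsymbol\Omega}$ as in~\eqref{eq:angVelocity3D}, one has $\dot\gamma = -\boldsymbol\Omega\times\gamma$, so horizontality ($\dot q\in D_q$) pins down $\boldsymbol\Omega$ via the rubber constraint $(\boldsymbol\Omega,\gamma)=0$ together with the projection requirement, and then $\dot x$ is determined by the rolling constraint~\eqref{eq:3Drolling}. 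Explicitly I expect $\mathrm{hor}(v)$ to correspond to $\boldsymbol\Omega = \gamma\times v$ (the unique vector orthogonal to $\gamma$ with $-\boldsymbol\Omega\times\gamma = v$ for $v\perp\gamma$) and $\dot x = bR(\boldsymbol\Omega\times\gamma) = bRv$. With this, evaluating $K_{kl}=\llangle h_k,h_l\rrangle$ amounts to substituting $\boldsymbol\Omega = \gamma\times v_k$ and $R^{-1}\dot x = bv_k$ into the quadratic form $\tfrac12(\I\boldsymbol\Omega,\boldsymbol\Omega)+\tfrac{m}{2}\|R^{-1}\dot x+\ell\boldsymbol\Omega\times E_3\|^2$ and polarising; the cross term $R^{-1}\dot x + \ell\boldsymbol\Omega\times E_3 = bv_k + \ell(\gamma\times v_k)\times E_3$ is where the combination $I_1\gamma_3^2 + I_3(1-\gamma_3^2) + m(b+\ell\gamma_3)^2$ should emerge after using $\I=\mathrm{diag}(I_1,I_1,I_3)$ and $\gamma_3=(\gamma,E_3)$.

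The main computational step is the bracket $[h_i,h_j]$ and then the contraction $\llangle[h_i,h_j],h_l\rrangle$. Here I would use that the horizontal lifts are built from the $\SO(3)$-component $\boldsymbol\Omega=\gamma\times v_i$ and the $\R^2$-component $bRv_i$; the bracket on $\SO(3)$ of left-invariant-type fields parametrised by $\gamma$ produces terms involving $\boldsymbol\Omega_i\times\boldsymbol\Omega_j$ plus derivative terms $v_i[\cdot]$ acting on the $\gamma$-dependence, and similarly for the translational part. Rather than computing the full bracket, it is cleaner to use~\eqref{eq:gyr-coeff} directly: one only needs the \emph{vertical part} $\mathcal{P}[h_i,h_j]$ modulo... — actually, by~\eqref{eq:gyroscopic-tensor} one needs $(T\pi)(\mathcal{P}[h_i,h_j]) - [v_i,v_j]$, and $[v_i,v_j]=v_i\times v_j$-type is known. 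I expect the curvature of the connection $D$ to be responsible for the whole of $\mathcal{T}$, and a direct curvature computation (the connection form is essentially $\boldsymbol\Omega\mapsto$ its $\gamma$-component plus the translational piece) may be the most efficient route. Once $\llangle[h_i,h_j],h_l\rrangle$ is in hand, solving the $2\times2$ linear system~\eqref{eq:gyr-coeff} for $C_{ij}^k$ and comparing with the right-hand side $Z[\phi]Y - Y[\phi]Z$ — i.e.\ checking $C_{ij}^k = \delta_j^k\,\partial\phi/\partial s^i - \delta_i^k\,\partial\phi/\partial s^j$ in adapted coordinates, or equivalently that $\mathcal{T}(Y,Z) = d\phi(Z)Y - d\phi(Y)Z$ — is a finite verification. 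The hard part will be organising the bracket/curvature computation so that the logarithmic derivative of $I_1\gamma_3^2 + I_3(1-\gamma_3^2) + m(b+\ell\gamma_3)^2$ appears cleanly; I anticipate that the key identity is that the "conformal factor" $e^{2\phi}$ is (up to a constant) the determinant-like scalar $\det(K_{kl})$ in a suitable orthonormal-on-$\Ss^2$ frame, so that $\phi = -\tfrac12\ln(\text{that expression})$ drops out of the structure of the metric $\llangle\cdot,\cdot\rrangle^S$ together with the one relation that a $\phi$-simple system on a surface must satisfy. I would also lean on the technical lemma promised in Appendix~\ref{app:proofs-lemmas} to handle whichever algebraic identity among $\gamma$, $E_3$, and the moments of inertia turns out to be the bottleneck.
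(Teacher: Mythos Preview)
Your strategy is essentially the paper's: compute the horizontal lifts, the induced metric $K_{kl}$, and $\llangle[h_i,h_j],h_l\rrangle$, solve~\eqref{eq:gyr-coeff} for the gyroscopic coefficients, and verify~\eqref{Ham-condition}. The paper actually defers the 3D statement to the $n$-dimensional Theorem~\ref{T:phiSimpleRouthnD}, where it works with the \emph{coordinate} vector fields $\partial/\partial s_i = E_i - (\gamma_i/\gamma_n)E_n$ on hemispheres rather than your rotational fields $v_i=e_i\times\gamma$; since the coordinate fields commute, the $[Y,Z]$ term in~\eqref{eq:gyroscopic-tensor} drops out and Lemma~\ref{l:MetricXi} (the appendix lemma you allude to) becomes the entire computational content rather than an auxiliary identity.
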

It follows from item (i) in Theorem~\ref{T:phi-simple}, that the reduced equations on $T^*\Ss^2$ possess the invariant measure: 
\begin{equation*}
\mu =\frac{1}{\sqrt{ I_1\gamma_3^2 +I_3(1-\gamma_3^2)+ m(b+\ell \gamma_3)^2 }} \, \nu,
\end{equation*}
where $\nu$ is the  Liouville volume form  on $T^*\Ss^2$.  Additionally,  item (ii) in Theorem~\ref{T:phi-simple} implies that 
  the reduced system on $T^*\Ss^2$ is conformally  Hamiltonian with time reparametrisation:
\begin{equation*}
dt=\sqrt{ I_1\gamma_3^2 +I_3(1-\gamma_3^2)+ m(b+\ell \gamma_3)^2 } \,\, d\tau.
\end{equation*}
\begin{remark}
The invariant measure for the problem  was
first given by Borisov and Mamaev in~\cite{BorMam2008} (see also~\cite{BorMamRubber}).
The Hamiltonisation of the system may be deduced as a consequence of the celebrated Chaplygin's 
Reducing Multiplier Theorem~\cite{ChapRedMult}
since the shape space $\Ss^2$ has dimension 2. For the multi-dimensional version
of the problem considered below,  these properties can no longer be 
deduced from known results and we will rely on  Theorem~\ref{T:phi-simple}.
\end{remark}

We do not present a proof of  Theorem~\ref{T:phiSimpleRouth3D} since it is  a particular instance of 
Theorem~\ref{T:phiSimpleRouthnD} below.

\subsection{The $n$D case}
\label{SS:nD}

We consider a multi-dimensional  generalisation of the problem considered in the previous section. Namely,
an $n$-dimensional rigid body of spherical shape, with axially symmetric distribution of mass,
 that rolls without slipping or spinning on a horizontal (with respect to 
gravity) hyperplane on 
$\R^n$.

The orientation of the sphere is determined by a rotation matrix $R\in \SO(n)$ that specifies the attitude of the sphere by 
relating a body fixed frame 
$\{E_1,\dots, E_n\}$ and a space frame $\{e_1, \dots, e_n\}$. In analogy with the 3D case, we assume that the
rolling takes place on the hyperplane spanned by $\{e_1, \dots, e_{n-1}\}$ and that the geometric centre $O$ of the sphere
has  space coordinates $x=(x_1, \dots, x_{n-1},b)$, where the constant $b>0$ 
is the sphere's radius. We will also assume, as in the 3D case, that the
body frame has its origin at the centre of mass $C$ and $E_n$ is aligned with the symmetry axis of the sphere.
The orientation of $E_n$ is such that the body coordinates of $O$ are $(0,\dots, 0,-\ell)$.
The configuration space of the problem is $Q=\SO(n)\times \R^{n-1}$. In analogy to 
 the 3D case, we will work with the embedding of $Q$ in
$\SO(n)\times \R^n$ defined by the holonomic constraint $x_n=b$.

As is well known, for $n>3$ the angular velocity can no longer be 
represented as a vector, but rather as an element in the Lie algebra $\so(n)$ of $\SO(n)$. We denote by $\omega\in \so(n)$ the
representation of the  angular velocity in the {\em space} frame and by $\Omega  \in \so(n)$
its representation in the  {\em body} frame. 
These are related to the right and left trivialisation of the tangent vector $\dot R\in T_R\SO(n)$ by
\begin{equation}
\label{eq:angVelocitynD}
\omega = \dot R R^{-1} \in \so (n), \qquad \Omega= R^{-1}\dot R   \in \so (n),
\end{equation}
and satisfy $\omega=\Ad_R\Omega$, where $\Ad_R:\so(n)\to \so(n)$ is the adjoint operator.

The constraint of rolling without slipping  is that the contact point of the sphere with the hyperplane $x_n=0$ has 
zero velocity at every time, and is expressed as the following natural generalisation of~\eqref{eq:3Drolling}:
\begin{equation}
\label{eq:nDrolling}
\dot x = b \boldsymbol{\omega} e_n, \qquad \mbox{or, equivalently,} \qquad 
\dot x  = b R \Omega \gamma,
\end{equation}
where $\dot x=(\dot x_1, \dots,  \dot x_{n-1}, 0)^t$, and the  Poisson vector $\gamma=(\gamma_1,\dots, \gamma_n)^t\in \R^n$
 is now given by  $\gamma:=R^{-1}e_n$.
On the other hand, the generalisation of the no-spin rubber constraint~\eqref{eq:3Drubber} is that the space representation of
the angular velocity satisfies
\begin{equation}
\label{eq:nDrubber}
\omega_{ij}=0, \qquad \mbox{for all} \qquad i,j=1, \dots, n-1.
\end{equation}
In other words, $\omega$ has the form
\begin{equation*}
\omega = \left ( \begin{array}{c|c}{\bf  0} & \begin{array}{c} \omega_{1n} \\ \vdots \\ \omega_{n-1}  \end{array} \\ \hline
 \begin{array}{ccc} -\omega_{1n} & \hdots &  - \omega_{n-1}  \end{array} & 0
\end{array} \right ),
\end{equation*}
where ${\bf 0}$ above denotes  the $(n-1)\times (n-1)$ zero matrix. The constraints~\eqref{eq:nDrubber} were considered
 by Jovanovi\'c~\cite{JovaRubber} in the
 treatment of the multi-dimensional rubber Chaplygin sphere. They 
 generalise  the 3D rubber constraint~\eqref{eq:3Drubber} in the following sense:
rotations of the sphere that occur on 2-dimensional planes that do not contain 
 the normal vector $e_n$  to the hyperplane where the rolling takes place are forbidden.

Our next step is to give a multi-dimensional  generalisation of  the Lagrangian~\eqref{eq:Lag3D}. 
For this matter we recall  that for an $n$-dimensional rigid body the inertia tensor $\I$ of the body is an operator
\begin{equation}
\label{eq:InTensorGen}
\I:\so(n) \to \so (n), \qquad \I(\Omega)=\J\Omega + \Omega\J,
\end{equation}
where $\J$ is the so-called mass tensor of the body, which is a symmetric and positive definite $n\times n$ matrix (see e.g. \cite{Ratiu80}). Our assumption that the mass distribution is axially symmetric, and that the $E_n$ axis of the body frame 
is aligned with the symmetry axis, imply that, with respect to our choice of body frame, the mass tensor has the form
\begin{equation}
\label{eq:MassTensorGen}
\J=\mbox{diag}(J_1, \dots, J_1, J_n), \qquad J_1, J_n>0.
\end{equation}

Similar to  our treatment of the 3D case, we shall represent elements of $TQ=T(\SO(n)\times \R^{n-1})$ as  quadruples 
$(R,\Omega, x ,\dot x)\in \SO(n)\times \so(n)\times \R^n\times \R^n$ with $x_n=b$ and $\dot x_n=0$. This is done by identifying $T\SO(n)=\SO(n)\times \so(n)$ via
the left trivialisation, and by embedding $T\R^{n-1}\hookrightarrow  \R^n\times \R^n$ putting $x_n=b$ 
and $\dot x_n=0$
The Lagrangian of the multi-dimensional system $L:TQ\to \R$ is
\begin{equation}
\label{eq:Lag-nD}
 L(R,\Omega, x ,\dot x)=\frac{1}{2} ( \I \Omega, \Omega  )_\kappa  
+ \frac{m}{2}\|R^{-1} \dot x + \ell \Omega  E_n
 \|^2  -m\mathcal{G}\ell \gamma_n, 
\end{equation}
where $\| \cdot \|$ is the euclidean norm in $\R^n$, $\dot x=(\dot x_1, \dots, \dot x_{n-1}, 0)^t$ 
and $( \cdot, \cdot)_\kappa$ is the Killing metric in $\so(n)$:
\begin{equation*}
(\xi, \eta)_\kappa =-\frac{1}{2}\tr(\xi \eta).
\end{equation*}
In~\eqref{eq:Lag-nD} we continue to denote by $\ell$ the distance of the centre of mass $C$ to the geometric centre $O$.

In analogy  to the 3D case, there is a symmetry action of the group $G=\SE(n-1)$ which we represent
 as the Lie subgroup of $\mathrm{GL}(n+1,\R)$ consisting of matrices of the form
\begin{equation*}
g=\left ( \begin{array}{ccc|c} & & &  \\ & h & & y \\ & & &  \\ \hline 0 & 0 & 0 & 1\end{array} \right ), \qquad \mbox{where} \qquad 
y =\begin{pmatrix} y_1 \\  \vdots \\ y_{n-1} \\ 0   \end{pmatrix} \in \R^n, \qquad \mbox{and} \qquad h=\left ( \begin{array}{c|c} \tilde h &   \begin{array}{c} 0 \\ 0 \end{array}  \\    \hline  \begin{array}{cc} 0& 0 \end{array}
 & 1 \end{array} \right )\in \SO(n), 
\end{equation*}
with $ \tilde h\in \SO(n-1)$. The action of $g\in \SE(n-1)$ given  above on an element $(R,x)\in \SO(n)\times \R^n$ looks 
identical to Eq.~\eqref{eq:action}, namely
\begin{equation}
\label{eq:actionnD}
g\cdot (R,x)= (hR,hx+y).
\end{equation}
As in the 3D case, the action restricts to $Q$ since the holonomic constraint $x_n=b$ is invariant. In analogy with 
Proposition~\ref{P:ChapSyst}
we have:
\begin{proposition}
\label{P:ChapSystnD}
The $n$-dimensional generalisation of the  problem of the rubber Routh sphere that rolls without slipping or spinning  on a hyperplane 
is a Chaplgyin system with  $G=\SE(n-1)$ acting on $Q$ via Eq.~\eqref{eq:actionnD}.
\end{proposition}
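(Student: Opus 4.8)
The plan is to verify the three conditions of Definition~\ref{D:Chap} for the group $G=\SE(n-1)$ acting on $Q=\SO(n)\times\R^{n-1}$ via~\eqref{eq:actionnD}, following the strategy of the proof of Proposition~\ref{P:ChapSyst}. First I would record the tangent lifted action: differentiating~\eqref{eq:actionnD} and using the identification $T\SO(n)\cong\SO(n)\times\so(n)$ by left trivialisation gives
\[
g\cdot(R,\Omega,x,\dot x)=(hR,\Omega,hx+y,h\dot x),
\]
so that the body angular velocity $\Omega$ and the vector $R^{-1}\dot x$ are $G$-invariant. Since the rotational block $h\in\SO(n)$ is block-diagonal, consisting of an $\SO(n-1)$ block and a trailing $1$, it fixes $e_n$; hence the Poisson vector $\gamma=R^{-1}e_n$ is $G$-invariant as well.

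With these facts, conditions (i) and (ii) are immediate. The Lagrangian~\eqref{eq:Lag-nD} depends on $(R,x)$ only through $\Omega$, $R^{-1}\dot x$ and $\gamma_n$, all of which are invariant, so $G$ acts by isometries of the kinetic energy metric and leaves the potential $-m\mathcal{G}\ell\gamma_n$ invariant, which is condition (i). For condition (ii) I would show that $D$ is $G$-invariant by verifying that its defining constraints are: the rolling constraint~\eqref{eq:nDrolling} transforms covariantly, since $\dot x\mapsto h\dot x$ while $bR\Omega\gamma\mapsto h(bR\Omega\gamma)$; and the rubber constraint~\eqref{eq:nDrubber} is preserved because the space angular velocity transforms by $\omega\mapsto h\omega h^{-1}=\Ad_h\omega$, and $\Ad_h$ preserves the splitting $\R^n=\R^{n-1}\oplus\R e_n$, hence the subspace of $\so(n)$ consisting of skew matrices whose upper-left $(n-1)\times(n-1)$ block vanishes.

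The substance of the proof is condition (iii), the transversality $T_qQ=\g\cdot q\oplus D_q$. I would first carry out the dimension count: $\dim Q=\binom{n}{2}+(n-1)$; the distribution $D$ has rank $n-1$, being cut out by the $n-1$ independent scalar equations contained in~\eqref{eq:nDrolling} together with the $\binom{n-1}{2}$ equations~\eqref{eq:nDrubber}; and $\dim\SE(n-1)=\binom{n-1}{2}+(n-1)=\binom{n}{2}$. Thus $\dim(\g\cdot q)+\dim D_q=\dim T_qQ$ once the action is free (checked below), so it suffices to verify $\g\cdot q\cap D_q=\{0\}$. Writing a generic element of $\se(n-1)$ as $\eta=A+\sum_i c_i\xi_i$, with $A$ in the $\so(n-1)$ summand and $\xi_i$ the translation generators, its infinitesimal generator at $q=(R,x)$ satisfies $\dot R=AR$, so its space angular velocity is $\omega=A$, and its translational part is $\dot x=Ax+\sum_i c_i e_i$. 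If this generator lies in $D_q$, then~\eqref{eq:nDrubber} forces $A=0$, because the upper-left $(n-1)\times(n-1)$ block of $\omega=A$ is $A$ itself; and then $\Omega=R^{-1}\dot R=0$, so~\eqref{eq:nDrolling} forces $\dot x=\sum_i c_i e_i=0$, whence all $c_i=0$. This gives transversality.

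Finally, the action is free and proper: $hR=R$ forces $h=I$ and then $hx+y=x$ forces $y=0$, so all stabilisers are trivial; properness follows from compactness of the $\SO(n-1)$ factor together with properness of the translational action of $\R^{n-1}$ on itself. I expect no genuine obstacle here --- everything is a direct $n$-dimensional transcription of the 3D argument in Proposition~\ref{P:ChapSyst} --- the only point deserving care being the bookkeeping in the dimension count and in the intersection argument for step (iii), where one must keep track of the space versus body representations of the angular velocity and of which block of $\so(n)$ the rubber constraint occupies.
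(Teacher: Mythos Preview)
Your proposal is correct and follows exactly the route the paper indicates: the paper omits the proof entirely, stating only that it is analogous to that of Proposition~\ref{P:ChapSyst}, and you have carried out that analogy carefully, with the appropriate adaptation for condition~(iii) (replacing the explicit enumeration of the three $\se(2)$ generators by the intersection argument $\g\cdot q\cap D_q=\{0\}$, which is the natural way to handle arbitrary $n$). The dimension count, the invariance checks, and the transversality argument are all in order.
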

The proof is analogous to that of Proposition~\ref{P:ChapSyst} and we omit the details. Also,
 in analogy with Remark~\ref{rmk:ChapSyst}, we mention that the  conclusion of Proposition~\ref{P:ChapSystnD}
  is independent of our symmetry assumptions on the
 mass distribution of the sphere and also applies to general rubber multi-dimensional  convex rigid bodies 
 that roll without slipping or spinning
 on a horizontal hyperplane in $\R^n$.

The  shape space of the system
 $S=(\SO(n)\times \R^{n-1})/\SE(n-1)$ is diffeomorphic to the $n-1$ dimensional sphere $\Ss^{n-1}$,
  and the orbit projection~\eqref{eq:pi-proj3D}, valid in 3D, 
generalises automatically to
\begin{equation*}
\pi:\SO(n)\times \R^{n-1} \to \Ss^{n-1}, \qquad (R,x)\mapsto \gamma,
\end{equation*}
where we recall that in the  the Poisson vector $\gamma= R^{-1}e_n\in \R^{n}$, and we realise $\Ss^{n-1}$ by its embedding
in $\R^n$:
\begin{equation}
\label{eq:Sn-1}
\Ss^{n-1}= \left \{ \gamma=(\gamma_1, \dots ,\gamma_n)^t \in \R^n \, : \, \gamma_1^2+\dots+\gamma_n^2 =1 \right  \}. 
\end{equation}
As a consequence of our discussion in section~\ref{S:ChapSyst}, 
 the $G=\SE(n-1)$-reduced equations of motion live on the cotangent bundle $T^*\Ss^{n-1}$. We now state our main 
result which, in view of Theorem~\ref{T:phi-simple},
 implies that the multi-dimensional rubber Routh sphere has an invariant measure and allows a Hamiltonisation.

\begin{theorem}
\label{T:phiSimpleRouthnD}
The $n$-dimensional  rubber Routh sphere that rolls without slipping or spinning  on a horizontal hyperplane  
is $\phi$-simple with $\phi: \Ss^{n-1}\to \R$ given by
\begin{equation}
\label{eq:phi-nD}
\phi(\gamma)= -\frac{1}{2}\ln \left ( 2 J_1 +(J_n-J_1)\gamma_n^2+ m(b+\ell \gamma_n)^2 \right ).
\end{equation}
\end{theorem}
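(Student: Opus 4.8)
The goal is to verify condition~\eqref{Ham-condition} for the gyroscopic tensor of this Chaplygin system, where the shape space is $S=\Ss^{n-1}$ and $\phi$ is the explicit function~\eqref{eq:phi-nD}. The most efficient route is the coordinate computation of Section~\ref{SS:local-expressions}: I would introduce convenient local coordinates $s=(s^1,\dots,s^{n-1})$ on a patch of $\Ss^{n-1}$, compute the horizontal lifts $h_k=\mbox{hor}(\partial/\partial s^k)$ into $D\subset TQ$, then evaluate the matrix $K_{kl}=\llangle h_k,h_l\rrangle$ and the brackets $\llangle [h_i,h_j],h_l\rrangle$, and finally solve the linear system~\eqref{eq:gyr-coeff} for the gyroscopic coefficients $C_{ij}^k$. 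Condition~\eqref{Ham-condition} is then equivalent to the concrete identity $C_{ij}^k = \delta^k_i\,\partial_j\phi - \delta^k_j\,\partial_i\phi$, which one checks componentwise. The key preparatory step is to compute the horizontal lift explicitly: since $D$ is the constraint distribution and $\g\cdot q \oplus D_q = T_qQ$, the horizontal lift of a tangent vector to $\Ss^{n-1}$ at $\gamma$ is obtained by taking an arbitrary lift of the curve $\gamma(t)$ to $Q=\SO(n)\times\R^{n-1}$, projecting off the $\SE(n-1)$-orbit directions, and imposing the rolling and rubber constraints~\eqref{eq:nDrolling}--\eqref{eq:nDrubber}. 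Because the $\SE(n-1)$-action~\eqref{eq:actionnD} makes $\gamma$ the only invariant, the horizontal subspace at a point is essentially parametrised by an angular velocity $\Omega\in\so(n)$ with $\Omega\gamma$ determining $\dot\gamma = -\Omega\gamma$ and with $\omega=\Ad_R\Omega$ constrained by the no-spin condition; the no-slip condition then fixes $\dot x$ in terms of $\Omega$. Working through this gives $h_k$ as an explicit equivariant vector field on $Q$.

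**Reduction to a computation at a point.** A substantial simplification comes from equivariance: since everything in sight ($\mathcal T$, $K$, $\phi$) is built from $G$-invariant data, it suffices to verify the identity at a single representative $\gamma$ in each $\SO(n-1)$-orbit on $\Ss^{n-1}$ — the orbits being the parallels $\gamma_n=\mathrm{const}$. Thus one can pick $\gamma$ in the "plane" $\gamma_1=\dots=\gamma_{n-2}=0$ (or use the last two coordinates $\gamma_{n-1},\gamma_n$ as the relevant ones) and reduce the whole calculation to a two-parameter sub-problem, mirroring the organisation where the 3D result Theorem~\ref{T:phiSimpleRouth3D} is a special case. This is exactly the strategy that makes the $n$D case tractable: the metric $K_{kl}$ at such a point should be (block-)diagonal with entries depending only on $\gamma_n$ through the combination $2J_1+(J_n-J_1)\gamma_n^2+m(b+\ell\gamma_n)^2$ appearing under the logarithm in~\eqref{eq:phi-nD}, together with some purely spherical factors, and the gyroscopic coefficients then organise themselves into the required gradient form.

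**The main obstacle.** The hard part is the bracket computation $\llangle[h_i,h_j],h_l\rrangle$: the horizontal lifts live on $\SO(n)\times\R^{n-1}$ and their Jacobi–Lie brackets mix the $\SO(n)$-part (through the non-abelian structure of $\so(n)$ and the $R$-dependence of $\omega=\Ad_R\Omega$) with the $\R^{n-1}$-part coming from the rolling constraint. Keeping track of the curvature-type terms of the principal connection $D$ while the metric $\llangle\cdot,\cdot\rrangle$ simultaneously varies with $\gamma$ is where the bookkeeping is heaviest, and it is presumably here that the technical lemma relegated to Appendix~\ref{app:proofs-lemmas} is needed. I would isolate the pure kinematics — the structure of $D$ and its curvature, which is independent of the mass distribution by Remark~\ref{rmk:ChapSyst} — into such a lemma, and then feed in the axisymmetric inertia~\eqref{eq:MassTensorGen} only at the final stage where the metric $K_{kl}$ and hence $\phi$ are computed. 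Once $C_{ij}^k$ and $\partial_i\phi$ are both in hand at the chosen base point, matching them is a routine check; the payoff, via Theorem~\ref{T:phi-simple}, is the stated invariant measure and conformal Hamiltonisation.
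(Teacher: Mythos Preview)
Your overall plan matches the paper's proof: choose the coordinates $s_i=\gamma_i$ on a hemisphere, compute the horizontal lifts $h_i$ explicitly from the constraints (the paper gets $h_i(R,x)=\big(\gamma\wedge(E_i-\tfrac{\gamma_i}{\gamma_n}E_n),\,-bR(E_i-\tfrac{\gamma_i}{\gamma_n}E_n)\big)$ via~\eqref{eq:horlift}), evaluate $K_{kl}$ and $\llangle[h_i,h_j],h_l\rrangle$ (this is the technical Lemma~\ref{l:MetricXi} in the appendix), solve~\eqref{eq:gyr-coeff} for $C_{ij}^k$, and match with $\partial_k\phi$. The one genuine difference is your proposed $\SO(n-1)$-reduction to a representative point with $\gamma_1=\cdots=\gamma_{n-2}=0$: the paper does \emph{not} do this, instead carrying out the full computation at a generic $\gamma$ in the hemisphere. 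Your reduction is legitimate (once you justify that $\mathcal{T}$ is $A$-equivariant, which follows from the $A$-invariance of $L$ and $D$ together with the fact that the $A$- and $G$-actions commute; note your phrase ``$G$-invariant data'' is not quite the right justification here---it is the residual $A$-symmetry that matters), and it would compress the linear algebra. On the other hand, the paper's direct route shows that the generic-point formulas are already clean: $K_{kl}$ has the form $\alpha(\gamma_n)\delta_{kl}+\beta(\gamma_n)\gamma_k\gamma_l$ and $\llangle[h_i,h_j],h_l\rrangle$ is proportional to $\gamma_j\delta_{il}-\gamma_i\delta_{jl}$, so the system~\eqref{eq:gyr-coeff} is solved by inspection without any symmetry reduction. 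Your instinct to separate the kinematic bracket $[h_i,h_j]$ (metric-independent) from the pairing step is exactly how the appendix is organised.
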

\begin{remark}
The conclusion of the above theorem is consistent with Theorem~\ref{T:phiSimpleRouth3D} by noting that, in the 3D case,
 the principal moments
of inertia $I_1, I_3$ are related to the entries $J_1, J_3$ of the mass tensor $\J$ by the relations $I_1=J_1+J_3$ and $I_3=2J_1$.
\end{remark}
As a direct consequence of Theorem~\ref{T:phi-simple} and 
Theorem~\ref{T:phiSimpleRouthnD} we obtain:
\begin{corollary}
\label{Cor:MeasHam}
The $\SE(n-1)$-reduced equations on $T^*\Ss^{n-1}$ of the 
$n$-dimensional  rubber Routh sphere that rolls without slipping or spinning  on a horizontal hyperplane 
possess the invariant measure: 
\begin{equation*}
\mu =\ \left ( 2 J_1 +(J_n-J_1)\gamma_n^2+ m(b+\ell \gamma_n)^2 \right )^{\frac{2-n}{2}} \, \nu,
\end{equation*}
where $\nu$ is the  Liouville volume form  on $T^*\Ss^{n-1}$. Moreover, such reduced system is conformally  Hamiltonian with time reparametrisation:
\begin{equation*}
dt=\sqrt{  2 J_1 +(J_n-J_1)\gamma_n^2+ m(b+\ell \gamma_n)^2 } \,\, d\tau.
\end{equation*}
\end{corollary}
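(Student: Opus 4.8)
The plan is to verify the defining identity~\eqref{Ham-condition} of $\phi$-simplicity directly, using the realisation~\eqref{eq:Sn-1} of the shape space $S=\Ss^{n-1}$ as a submanifold of $\R^n$ and the linear system~\eqref{eq:gyr-coeff} for the gyroscopic coefficients; the latter is convenient since it only involves the Lie brackets of horizontal lifts and the kinetic metric, never the orthogonal projector $\mathcal{P}$ itself. Fix local coordinates $s=(s^1,\dots,s^{n-1})$ on $\Ss^{n-1}$ with parametrisation $\gamma=\gamma(s)$ and write $Y_i:=\partial\gamma/\partial s^i$, $Y_{i,n}:=(Y_i,E_n)=\partial\gamma_n/\partial s^i$, $h_i:=\mbox{hor}(Y_i)$; working in a single chart suffices because $\phi$-simplicity is a pointwise condition on the tensor $\mathcal{T}$.

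\emph{Step 1: horizontal lift and reduced metric.} Writing $\gamma=R^{-1}e_n$ and solving the rolling constraint~\eqref{eq:nDrolling} together with the rubber constraint~\eqref{eq:nDrubber}, I would show that a tangent vector $v\in T_\gamma\Ss^{n-1}$ lifts to the vector in $D$ whose left-trivialised data $(\Omega,\dot x)\in\so(n)\times\R^n$ is $\Omega=\gamma v^t-v\gamma^t$, $\dot x=-bRv$; horizontality is immediate since then $\omega=R\Omega R^{-1}=e_n(Rv)^t-(Rv)e_n^t$ has vanishing $(i,j)$-entry for $i,j\leq n-1$ and $bR\Omega\gamma=-bRv=\dot x$. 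Plugging these lifts into the Lagrangian~\eqref{eq:Lag-nD} and using $\gamma^t\J\gamma=J_1+(J_n-J_1)\gamma_n^2$ and $v^t\J v=J_1\|v\|^2+(J_n-J_1)v_n^2$ (with $v_n:=(v,E_n)$), the rotational and translational parts of the kinetic energy combine into the reduced metric
\begin{equation*}
\llangle v,w\rrangle^S_\gamma = \rho(\gamma_n)\,(v,w) + (J_n-J_1+m\ell^2)\,v_nw_n, \qquad \rho(\gamma_n):=2J_1+(J_n-J_1)\gamma_n^2+m(b+\ell\gamma_n)^2,
\end{equation*}
where $(\,\cdot\,,\cdot\,)$ is the Euclidean inner product of $\R^n$. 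In particular $K_{il}=\rho(\gamma_n)(Y_i,Y_l)+(J_n-J_1+m\ell^2)\,Y_{i,n}Y_{l,n}$, which already singles out the candidate $\phi=-\tfrac12\ln\rho(\gamma_n)$ of~\eqref{eq:phi-nD} as minus one half the logarithm of the conformal factor dressing the flat part of the metric.

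\emph{Step 2: bracket of horizontal lifts.} This is the computation I would isolate as the technical lemma of Appendix~\ref{app:proofs-lemmas}. Using the formula for Jacobi-Lie brackets in the left trivialisation $TQ\cong\SO(n)\times\so(n)\times\R^n\times\R^n$ and differentiating the ($\gamma$-dependent) data of Step~1, one finds for all vector fields $Y,Z$ on $\Ss^{n-1}$ that $[\,\mbox{hor}(Y),\mbox{hor}(Z)\,]=\mbox{hor}([Y,Z])+V_{Y,Z}$, where $V_{Y,Z}$ is the \emph{vertical} vector field (tangent to the $\SE(n-1)$-orbits, as it must be because the bracket of $\pi$-related fields is $\pi$-related to $[Y,Z]$) with trivialised data $(YZ^t-ZY^t,\,0)$. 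Hence $[h_i,h_j]$ has trivialised data $(Y_iY_j^t-Y_jY_i^t,\,0)$, and pairing it against $h_l$ in the kinetic metric — which only requires the trace pairing of the $\I$-terms, the Euclidean pairing of the translational terms and the identity $\gamma^t\J Y_i=(J_n-J_1)\gamma_n Y_{i,n}$ — collapses to
\begin{equation*}
\llangle [h_i,h_j], h_l \rrangle = \tfrac12\,\rho'(\gamma_n)\bigl( Y_{i,n}(Y_j,Y_l) - Y_{j,n}(Y_i,Y_l) \bigr).
\end{equation*}

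\emph{Step 3: conclusion.} Since $Y_i[\phi]=-\tfrac{\rho'(\gamma_n)}{2\rho(\gamma_n)}\,Y_{i,n}$, the right-hand side above equals $\rho(\gamma_n)\bigl(Y_j[\phi](Y_i,Y_l)-Y_i[\phi](Y_j,Y_l)\bigr)$, and this equals $Y_j[\phi]K_{il}-Y_i[\phi]K_{jl}$ because the difference of the two is $(J_n-J_1+m\ell^2)\,Y_{l,n}\bigl(Y_j[\phi]Y_{i,n}-Y_i[\phi]Y_{j,n}\bigr)=0$, the cancellation coming once more from $Y_i[\phi]\propto Y_{i,n}$ with the common factor $-\rho'/(2\rho)$. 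Therefore $\llangle[h_i,h_j],h_l\rrangle=Y_j[\phi]K_{il}-Y_i[\phi]K_{jl}$; comparing with~\eqref{eq:gyr-coeff} and inverting the nondegenerate matrix $(K_{kl})$ gives $C_{ij}^k=Y_j[\phi]\,\delta_i^k-Y_i[\phi]\,\delta_j^k$, which is exactly the coordinate form of~\eqref{Ham-condition}, so the system is $\phi$-simple with $\phi$ as in~\eqref{eq:phi-nD}. I expect the main obstacle to be Step~2: organising the left-trivialised bracket so that all $\gamma$-dependent cross terms cancel and only $Y_iY_j^t-Y_jY_i^t$ survives — which is why it merits a separate lemma — whereas Steps~1 and~3 are routine bookkeeping with the metric coefficients. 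Theorem~\ref{T:phiSimpleRouth3D} is then recovered through the substitution $I_1=J_1+J_3$, $I_3=2J_1$.
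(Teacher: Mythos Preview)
Your approach is essentially the paper's: you establish $\phi$-simplicity (Theorem~\ref{T:phiSimpleRouthnD}) via the same horizontal-lift and bracket computations --- your Steps~1--2 match Proposition~4.5, Lemma~\ref{l:MetricXi} and Eq.~\eqref{eq:commutator}, only packaged more cleanly in terms of $\rho(\gamma_n)$ and general chart vectors $Y_i$ rather than the explicit hemisphere coordinates $s_i=\gamma_i$ --- and your Step~3 checks $\llangle[h_i,h_j],h_l\rrangle=Y_j[\phi]K_{il}-Y_i[\phi]K_{jl}$ directly instead of first solving for $C_{ij}^k$ as in Lemma~\ref{L:Gyro-coeff-Veselova}. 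The paper's proof of the Corollary itself is then just a one-line invocation of Theorem~\ref{T:phi-simple} with $r=n-1$, reading off the density $\exp((r-1)\phi)=\rho^{(2-n)/2}$ and the reparametrisation $dt=\exp(-\phi)\,d\tau=\sqrt{\rho}\,d\tau$; you should close with that step explicitly.
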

\begin{proof}
 The conclusion about the invariant measure follows from~Eq.~\eqref{eq:phi-nD}
and  item  (i) of Theorem~\ref{T:phi-simple} (putting $r=n-1$).
The conformally Hamiltonian structure of the equations of motion follows from item (ii) of Theorem~\ref{T:phi-simple}.
\end{proof}

The rest of the paper is devoted to prove Theorem~\ref{T:phiSimpleRouthnD} via a coordinate calculation. The strategy is to  follow
 the steps outlined in section~\ref{SS:local-expressions} to compute the gyroscopic coefficients $C_{ij}^k$. We will work with the
local coordinates\footnote{throughout this section, and in contrast with our notation in section~\ref{S:ChapSyst}, 
we use sub-indices instead of super-indices on the
coordinates on $S$.} $(s_1, \dots, s_{n-1})$ valid on the northern  $\Ss_+^{n-1}$, or 
southern   $\Ss_-^{n-1}$, hemispheres of $\Ss^{n-1}$
by the relations:
\begin{equation*}
\gamma_1=s_1,\quad \dots \quad ,  \quad \gamma_{n-1}=s_{n-1}, \quad  \gamma_n=\pm\sqrt{1-s_1^2-\dots -s_{n-1}^2}. 
\end{equation*}
Associated  to the embedding of $\Ss^{n-1}$ in $\R^n$, there is an  embedding of $T\Ss^{n-1}$ in $\R^n\times \R^n$
given by
\begin{equation}
\label{eq:TSn-1}
T\Ss^{n-1} = \left \{ (\gamma, v)\in\R^n\times \R^n \, : \, \|\gamma\|=1, \quad  (\gamma, v)_{\R^n}=0 \right \}, 
\end{equation}
where $ (\cdot , \cdot )_{\R^n}$ is the Euclidean scalar product in $\R^n$.
Under the above identification,  and regardless of the hemisphere under consideration, 
the coordinate vector fields $\frac{\partial}{\partial s_i}$ are given in terms of the  canonical vectors $E_1, \dots, E_n$, 
by
\begin{equation}
\label{eq:ds_i-example}
\frac{\partial}{\partial s_i}=E_i -\frac{\gamma_i}{\gamma_n} E_n, \qquad i=1, \dots, n-1.
\end{equation}
 
 For the rest of the section, given $(R,x)\in \SO(n)\times \R^n$, we identify 
 \begin{equation*}
T_{(R,x)}(\SO(n)\times \R^n)=\so(n)\times \R^n,
\end{equation*}
using the left trivialisation of $T_R\SO(n)$ and the usual identification $T_x\R^n=\R^n$. 
Therefore, a vector field on $Q$ is represented as an assignment that to a 
pair $(R,x)\in \SO(n)\times \R^n$ with $x_n=b$, associates a pair $(\xi(R,x),V(R,x))\in \so(n)\times \R^n$ that satisfies $V_n(R,x)=0$.
We will also find it useful to denote 
\begin{equation*}
y\wedge z := yz^t -zy^t\in \so(n), \qquad \mbox{for column vectors} \qquad  y,z\in \R^n.
\end{equation*}

 In accordance with the notation of section~\ref{SS:local-expressions} we denote by $h_i$ the horizontal
 lift of the coordinate vector field $\frac{\partial}{\partial s_i}$. Namely,
 \begin{equation*}
  h_i(R,x):= \mbox{hor}_{(R,x)} \left ( \frac{\partial}{\partial s_i}\right ).
\end{equation*}
The following proposition gives an explicit expression for $h_i(R,x)$.
\begin{proposition}
Let $\gamma\in \Ss_\pm^{n-1}$ and $(R,x)\in \pi^{-1}(\gamma)$, (i.e. $\gamma=R^{-1}e_n$). The horizontal lift
\begin{equation}
\label{eq:defXi}
h_i(R,x) =\left (  \gamma\wedge \left ( E_i  -\frac{\gamma_i}{\gamma_n}E_n \right ), aR \left (- E_i  + \frac{\gamma_i}{\gamma_n}E_n 
\right )  \right ) \in \so(n) \times \R^n,
\qquad i=1,\dots, n-1.
\end{equation}
%
\end{proposition}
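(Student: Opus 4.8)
The plan is to verify directly that the right-hand side of \eqref{eq:defXi} defines a vector field on $Q$ that (a) is $G$-equivariant, (b) takes values in the constraint distribution $D$, and (c) is $\pi$-related to $\frac{\partial}{\partial s_i}$; these three properties characterise the horizontal lift uniquely. First I would recall that, by \eqref{eq:ds_i-example}, the vector $E_i - \frac{\gamma_i}{\gamma_n}E_n$ is precisely the coordinate representation of $\frac{\partial}{\partial s_i}$ as a tangent vector to $\Ss^{n-1} \subset \R^n$ at $\gamma$; call it $w_i$ for brevity. Since $\pi(R,x)=\gamma=R^{-1}e_n$, the linearisation of $\pi$ sends a tangent vector $(\xi, V) \in \so(n)\times \R^n$ at $(R,x)$ to $\frac{d}{dt}\big|_0 (R e^{t\xi})^{-1} e_n = -\xi R^{-1} e_n = -\xi\gamma$. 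Applying this to the first component $\gamma \wedge w_i = \gamma w_i^t - w_i \gamma^t$ of \eqref{eq:defXi}, one computes $-(\gamma\wedge w_i)\gamma = -\gamma(w_i,\gamma)_{\R^n} + w_i(\gamma,\gamma)_{\R^n} = w_i$, using $(\gamma,\gamma)_{\R^n}=1$ and $(w_i,\gamma)_{\R^n}=0$ (the latter because $w_i \in T_\gamma\Ss^{n-1}$). Hence $T\pi(h_i) = w_i$, which is exactly $\frac{\partial}{\partial s_i}$, establishing (c).

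Next I would check (b), namely that $h_i(R,x)$ satisfies the rolling constraint \eqref{eq:nDrolling} and the rubber constraint \eqref{eq:nDrubber}. Writing $h_i = (\xi, V)$ with $\xi = \gamma \wedge w_i \in \so(n)$ and $V = aR(-w_i)$, one has $\Omega = \xi$ and $\omega = \Ad_R \xi = R\xi R^{-1} = (R\gamma)\wedge(Rw_i) = e_n \wedge (Rw_i)$. The rubber constraint demands $\omega_{ij}=0$ for $i,j \le n-1$, i.e. that $\omega$ has nonzero entries only in its last row and column; since $e_n \wedge (Rw_i) = e_n (Rw_i)^t - (Rw_i) e_n^t$, this is immediate. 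For the rolling constraint one needs $V = bR\Omega\gamma = bR\xi\gamma$; but $\xi\gamma = (\gamma\wedge w_i)\gamma = \gamma(w_i,\gamma)_{\R^n} - w_i(\gamma,\gamma)_{\R^n} = -w_i$, so $bR\xi\gamma = -bRw_i$, which matches $V$ precisely when the constant $a$ equals $b$ — so presumably $a=b$ is fixed by this computation (or $a$ is being left as a name for $b$ to keep formulas compact). One also checks $V_n = 0$: indeed $V_n = (V, e_n)_{\R^n} = -a(Rw_i, e_n)_{\R^n} = -a(w_i, R^{-1}e_n)_{\R^n} = -a(w_i,\gamma)_{\R^n}=0$, so $h_i$ is genuinely tangent to the embedded $Q$.

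Finally, for (a) I would verify $G$-equivariance under \eqref{eq:actionnD}. The tangent lift sends $(R,x;\xi,V)$ to $(hR, hx+y; \xi, hV)$ — the body angular velocity $\Omega=\xi$ is unchanged because the action is by left translation on $R$, and the $\R^n$-component transforms by $h$. Since $h\in\SO(n)$ fixes $e_n$, the Poisson vector $\gamma = R^{-1}e_n = (hR)^{-1}e_n$ is invariant, hence so are $w_i$, $\gamma\wedge w_i$, and $aR(-w_i)$ gets multiplied by $h$ on the left, i.e. $a(hR)(-w_i)$. Comparing with $h_i(hR, hx+y)$ computed from \eqref{eq:defXi} — which has first component $\gamma\wedge w_i$ and second component $a(hR)(-w_i)$ — we see they agree, proving equivariance. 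Assembling (a)–(c) identifies the expression in \eqref{eq:defXi} with $\mbox{hor}_{(R,x)}(\frac{\partial}{\partial s_i})$, which is what had to be shown. The only genuinely delicate point is bookkeeping the constant $a$ versus $b$ and being careful with the sign conventions in the trivialisations \eqref{eq:angVelocitynD} and in the formula for $T\pi$; once those are pinned down, the verification is a short direct computation with the identities $(\gamma,\gamma)_{\R^n}=1$ and $(w_i,\gamma)_{\R^n}=0$.
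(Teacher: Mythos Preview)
Your proof is correct but takes a different route from the paper's. The paper argues \emph{constructively}: starting from the rubber constraint~\eqref{eq:nDrubber} it shows that any admissible $\omega$ has the form $e_n\wedge y$, hence $\Omega=\gamma\wedge v$ with $v\perp\gamma$; differentiating $\gamma=R^{-1}e_n$ then forces $v=\dot\gamma$, so $\Omega=\gamma\wedge\dot\gamma$, and the rolling constraint~\eqref{eq:nDrolling} gives $\dot x=-bR\dot\gamma$. Specialising $\dot\gamma$ to $\frac{\partial}{\partial s_i}$ via~\eqref{eq:ds_i-example} yields~\eqref{eq:defXi}. Your argument instead \emph{verifies} that the given expression lies in $D$ and projects correctly under $T\pi$, which is enough since $D_q$ is complementary to the vertical space. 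Both approaches are short; the paper's derivation explains where the formula comes from and yields the general horizontal lift~\eqref{eq:horlift} as a byproduct, while yours is more self-contained once the candidate is on the table.

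Two minor comments. First, your check of $G$-equivariance in~(a) is redundant: once you have established~(b) and~(c) pointwise, equivariance follows automatically from the $G$-invariance of $D$ and of $\pi$, since the horizontal lift at each point is already uniquely determined by those two conditions. It does no harm to verify it, but it is not part of the characterisation. Second, your observation that the constant $a$ in~\eqref{eq:defXi} must equal the sphere radius $b$ is correct --- the paper's own derivation via~\eqref{eq:horlift} produces $-bR\dot\gamma$, so $a$ is simply an alternate (and inconsistent) name for $b$ that also appears in the appendix computations.
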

\begin{proof}
The rubber  constraints~\eqref{eq:nDrubber} imply $\omega=e_n\wedge y$ for a vector $y\in \R^n$ that may be assumed to be
perpendicular to $e_n$. 
Hence, $$\Omega=\mbox{Ad}_{R^{-1}}(e_n\wedge a)=(R^{-1}e_n)\wedge ( R^{-1}y)=\gamma\wedge v,$$
where $v=R^{-1}y$ is perpendicular to $\gamma$. On the other hand, differentiating $\gamma=R^{-1}e_n$ gives $\dot \gamma =-\Omega \gamma$.
Whence,  $\dot \gamma = -(\gamma \wedge v)\gamma=v$ and we conclude that 
\begin{equation}
\label{eq:Omega-as-gammawedgedotgamma}
\Omega=\gamma \wedge \dot \gamma.
\end{equation}
The rolling constraint~\eqref{eq:nDrolling} then implies
\begin{equation*}
\dot x  = b R (\gamma \wedge \dot \gamma) \gamma =-bR\dot \gamma.
\end{equation*}
Therefore, we get the following expression for the horizontal lift
\begin{equation}
\label{eq:horlift}
\mbox{hor}_{(R,x)} ( \dot \gamma ) = \left ( \gamma\wedge \dot \gamma, -bR\dot \gamma \right )\in \so(n)\times \R^n, 
\quad \mbox{where} \quad
\dot \gamma \in T_\gamma \Ss^{n-1} \quad \mbox{and} \quad (R,x)\in \pi^{-1}(\gamma).
\end{equation}
The result then follows by using~\eqref{eq:ds_i-example}.
\end{proof}

%
%
%
%
The following lemma gives expressions, that involve the horizontal lifts $h_i$ 
and the kinetic energy metric $\llangle \cdot , \cdot \rrangle$, that will be used
below to compute the gyroscopic coefficients $C_{ij}^k$. Its proof is 
postponed to  Appendix~\ref{app:proofs-lemmas}.
%
%
 \begin{lemma}
\label{l:MetricXi}
For $i,j,k,l\in \{1, \dots, n-1\}$ we have
\begin{equation}
\begin{split}
\label{eq:Kij}
K_{kl} & = \llangle h_k, h_l\rrangle  \\
&= \left ( 2 J_1 +(J_n-J_1)\gamma_n^2+ m(b+\ell \gamma_n)^2 \right ) \, \delta_{kl} +
 \left (  \frac{J_1+J_n}{\gamma_n^2} + J_n-J_1 + m\left ( \left ( \frac{b}{\gamma_n} +\ell \right )^2+\frac{\ell^2}{ \gamma^2_n} \right )
 \right )\gamma_k\gamma_l,
 \end{split}
\end{equation}
and
\begin{equation}
\label{eq:XiXjXl}
\llangle [h_i, h_j ] , h_l \rrangle  = \left ( J_n-J_1 +m\ell \left ( \ell + \frac{b}{\gamma_n} \right ) \right )
 \left ( \gamma_j \delta_{il} -\gamma_i\delta_{jl} \right ),
\end{equation}
where $\delta_{ij}$ is the Kronecker delta.
\end{lemma}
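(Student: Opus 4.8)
The plan is to prove Lemma~\ref{l:MetricXi} by a direct computation of the kinetic energy metric on the explicit vector fields $h_i$ given by~\eqref{eq:defXi}. The first step is to record what the kinetic energy metric $\llangle \cdot, \cdot \rrangle$ does on pairs $(\xi, V) \in \so(n)\times \R^n$ representing tangent vectors to $Q$. Reading off the Lagrangian~\eqref{eq:Lag-nD}, the metric evaluated on $(\xi, V)$ and $(\xi', V')$ (where $V = R^{-1}\dot x$, $V' = R^{-1}\dot x'$ are the left-trivialised translational velocities) is
\begin{equation*}
\llangle (\xi,V), (\xi',V') \rrangle = (\I\xi, \xi')_\kappa + m\,(V + \ell\, \xi E_n,\ V' + \ell\, \xi' E_n)_{\R^n}.
\end{equation*}
For our horizontal lifts, by~\eqref{eq:horlift} we have $\xi_i = \gamma\wedge w_i$ and the translational part is $-bR w_i$ with $w_i = E_i - (\gamma_i/\gamma_n)E_n$; after left-trivialising the translational velocity becomes simply $-b w_i$, so the $\R^n$-part of the metric contribution is $m\,(-b w_i + \ell (\gamma\wedge w_i)E_n, \ -b w_j + \ell (\gamma\wedge w_j)E_n)_{\R^n}$. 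A useful simplification is that $(\gamma\wedge w)E_n = (\gamma, E_n)w - (w,E_n)\gamma = \gamma_n w - w_n\gamma$, and since $w_i = E_i - (\gamma_i/\gamma_n)E_n$ has $n$-th component $-\gamma_i/\gamma_n$, this is $\gamma_n w_i + (\gamma_i/\gamma_n)\gamma$.

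Next I would expand the Killing-metric term. Using $(\I\xi,\xi')_\kappa = -\tfrac12 \tr(\I(\xi)\xi')$ and $\I(\xi) = \J\xi + \xi\J$ with $\J = \diag(J_1,\dots,J_1,J_n)$, one gets a bilinear form on $\so(n)$ that I would evaluate on $\xi_i = \gamma\wedge w_i$. The key auxiliary computation is $\tr\big((\gamma\wedge a)(\gamma\wedge b)\big)$ and its weighted variant $\tr\big(\J(\gamma\wedge a)(\gamma\wedge b)\big)$ and $\tr\big((\gamma\wedge a)\J(\gamma\wedge b)\big)$ for vectors $a,b\in\R^n$; these expand via $\gamma\wedge a = \gamma a^t - a\gamma^t$ into combinations of $(\gamma,a)$, $(\gamma,b)$, $(a,b)$, $(\J\gamma,\gamma)$, $(\J a, b)$, etc. Substituting $a = w_i$, $b = w_j$ and using the identities $(\gamma,w_i) = \gamma_i - \gamma_i = 0$ wait—actually $(\gamma, w_i) = \gamma_i - (\gamma_i/\gamma_n)\gamma_n = 0$, so all the $(\gamma,w_i)$ terms drop; one is left with $(w_i,w_j) = \delta_{ij} + \gamma_i\gamma_j/\gamma_n^2$, $(\J\gamma,\gamma) = 2J_1(1-\gamma_n^2)\cdot\tfrac12 + \dots$ — more precisely $(\J\gamma,\gamma) = J_1(\gamma_1^2+\dots+\gamma_{n-1}^2) + J_n\gamma_n^2 = J_1 + (J_n - J_1)\gamma_n^2$ after using $\|\gamma\|=1$ — and $(\J w_i, w_j) = J_1\delta_{ij} + J_n\gamma_i\gamma_j/\gamma_n^2$ (treating $i,j\le n-1$). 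Collecting the coefficient of $\delta_{kl}$ and of $\gamma_k\gamma_l$ separately, and checking that the translational part contributes $m(b+\ell\gamma_n)^2\delta_{kl}$ plus the stated $\gamma_k\gamma_l$ piece, should reproduce~\eqref{eq:Kij} exactly.

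For~\eqref{eq:XiXjXl} the first task is to compute the Jacobi--Lie bracket $[h_i, h_j]$ on $Q$. Here I would use that $h_i = \mathrm{hor}(\partial/\partial s_i)$ and invoke the gyroscopic-tensor relation~\eqref{eq:gyr-coeff}: actually it is cleaner to compute $[h_i,h_j]$ directly from~\eqref{eq:defXi} using the left-trivialised bracket formula on $\so(n)\times\R^n$, i.e. for $(\xi,V),(\xi',V')$ depending on $(R,x)$ the bracket's $\so(n)$-component is $\mathrm{d}\xi'(\cdot) - \mathrm{d}\xi(\cdot) + [\xi,\xi']$ with the commutator, and similarly for the $\R^n$-component — being careful that $R^{-1}\dot R = \xi$ moves $R$, so directional derivatives of quantities depending on $\gamma = R^{-1}e_n$ pick up $\dot\gamma = -\xi\gamma$. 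The main obstacle, and the place where the bulk of the work lies, is precisely this bracket computation: one must differentiate the $\gamma$-dependent coefficients in~\eqref{eq:defXi} along the flows and keep track of the matrix commutators $[\gamma\wedge w_i, \gamma\wedge w_j]$, which by the wedge identities equal $(\gamma,w_j)(\dots) - \dots$ and largely collapse because $(\gamma,w_i)=0$, leaving $\gamma\wedge((w_i,w_j\text{-combinations}))$ type terms. Once $[h_i,h_j]$ is in hand — and I expect it to come out proportional to a horizontal-type vector built from $\gamma_j w_i - \gamma_i w_j$ up to a non-horizontal correction that is annihilated by pairing with $h_l$ — taking $\llangle \cdot, h_l\rrangle$ via the metric formula above, and using $(\gamma, w_l) = 0$ again together with $(w_i, w_l) = \delta_{il} + \gamma_i\gamma_l/\gamma_n^2$, should collapse to the stated answer $(J_n - J_1 + m\ell(\ell + b/\gamma_n))(\gamma_j\delta_{il} - \gamma_i\delta_{jl})$. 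I would double-check the final coefficient by specialising to $n=3$ and comparing against the known $3$D rubber Routh sphere, and by a sanity check of symmetry ($K_{kl} = K_{lk}$, and antisymmetry of~\eqref{eq:XiXjXl} under $i\leftrightarrow j$).
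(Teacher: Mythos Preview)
Your approach is essentially the paper's: split $\llangle h_k,h_l\rrangle$ into an inertia (Killing) piece and a translational piece, exploit $(\gamma,w_i)=0$ throughout, and for~\eqref{eq:XiXjXl} first compute the Jacobi--Lie bracket $[h_i,h_j]$ explicitly on $Q$ and then pair with $h_l$.

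Two corrections worth flagging. First, a sign: $(\gamma\wedge w)E_n=(w,E_n)_{\R^n}\gamma-(\gamma,E_n)_{\R^n}w=w_n\gamma-\gamma_n w$, not the other way round; this is harmless in $B_{kl}$ since the error appears symmetrically in both factors. Second, and more substantively, your expectation for the form of $[h_i,h_j]$ is off. The paper obtains (quoting \cite[Lemma~4.4]{GNMarr2018} for the $\so(n)$-part and checking the $\R^n$-part by direct computation)
\[
[h_i,h_j](R,x)=\bigl(\,w_i\wedge w_j,\ 0\,\bigr)\in\so(n)\times\R^n,
\]
so the $\so(n)$-component is $w_i\wedge w_j$, \emph{not} of the form $\gamma\wedge(\cdot)$---hence not ``horizontal-type'' at all---and the translational component vanishes identically. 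The vanishing of the $\R^n$-part is the delicate step: it requires differentiating the $R$-dependent entries of $-bRw_j$ along the left-invariant flow generated by $\gamma\wedge w_i$ (using $\dot\gamma=-\xi\gamma$ as you note, plus $(\gamma\wedge E_i)[R_{kl}]$-type identities) and checking that the two contributions cancel. Once $[h_i,h_j]=(w_i\wedge w_j,0)$ is in hand, pairing with $h_l$ via your metric formula and the identity $(u_1\wedge v_1,u_2\wedge v_2)_\kappa=(u_1,u_2)_{\R^n}(v_1,v_2)_{\R^n}-(u_1,v_2)_{\R^n}(u_2,v_1)_{\R^n}$ collapses the Killing piece to $(J_n-J_1)(\gamma_j\delta_{il}-\gamma_i\delta_{jl})$ directly, and the translational piece gives the $m\ell(\ell+b/\gamma_n)$ contribution.
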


We are now ready to prove the following lemma that gives 
explicit expressions for the gyroscopic coefficients $C_{ij}^k$ in our coordinates.
\begin{lemma}
\label{L:Gyro-coeff-Veselova}
For $i,j,k\in \{1, \dots, n-1\}$ we have
\begin{equation}
\label{eq:Cijk}
C_{ij}^k  = \frac{ \left ( J_n-J_1 +m\ell \left ( \ell + \frac{b}{\gamma_n} \right ) \right )
 \left ( \gamma_j \delta_{ik} -\gamma_i\delta_{jk} \right )}{2 J_1 +(J_n-J_1)\gamma_n^2+ m(b+\ell \gamma_n)^2}.
\end{equation}
\end{lemma}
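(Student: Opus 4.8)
The plan is to solve the linear system~\eqref{eq:gyr-coeff} for the gyroscopic coefficients $C_{ij}^k$ using the explicit data furnished by Lemma~\ref{l:MetricXi}. Fix $i,j\in\{1,\dots,n-1\}$. The system reads $\sum_{k=1}^{n-1} K_{kl}\,C_{ij}^k = \llangle [h_i,h_j],h_l\rrangle$ for $l=1,\dots,n-1$, where by~\eqref{eq:Kij} the matrix $K_{kl}$ has the rank-one-perturbation-of-a-scalar form $K_{kl}=\alpha\,\delta_{kl}+\beta\,\gamma_k\gamma_l$ with
\begin{equation*}
\alpha = 2J_1+(J_n-J_1)\gamma_n^2+m(b+\ell\gamma_n)^2,\qquad \beta = \frac{J_1+J_n}{\gamma_n^2}+J_n-J_1+m\left(\Big(\tfrac{b}{\gamma_n}+\ell\Big)^2+\tfrac{\ell^2}{\gamma_n^2}\right),
\end{equation*}
and by~\eqref{eq:XiXjXl} the right-hand side equals $c\,(\gamma_j\delta_{il}-\gamma_i\delta_{jl})$ with $c=J_n-J_1+m\ell(\ell+\tfrac{b}{\gamma_n})$. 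So the task is purely the inversion of a Sherman–Morrison-type matrix against a specific right-hand side.

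The key observation is that the proposed answer~\eqref{eq:Cijk} is of the form $C_{ij}^k=\tfrac{c}{\alpha}(\gamma_j\delta_{ik}-\gamma_i\delta_{jk})$, i.e. the ``same shape'' as the right-hand side with the scalar part $\alpha$ of $K$ dividing through and the $\beta\gamma_k\gamma_l$ part contributing nothing. The reason the rank-one correction drops out is that the right-hand-side vector (as a function of $l$), namely $\gamma_j\delta_{il}-\gamma_i\delta_{jl}$, is $\llangle\cdot,\gamma\rrangle$-orthogonal to $\gamma$: contracting with $\gamma_l$ and summing gives $\gamma_j\gamma_i-\gamma_i\gamma_j=0$. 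Hence I would simply verify the claim directly: substitute the candidate $C_{ij}^k$ into $\sum_k K_{kl}C_{ij}^k$, split into the $\alpha$-term and the $\beta$-term, observe that $\sum_k \delta_{kl}(\gamma_j\delta_{ik}-\gamma_i\delta_{jk})=\gamma_j\delta_{il}-\gamma_i\delta_{jl}$ reproduces exactly the right-hand side after multiplication by $\alpha\cdot\tfrac{c}{\alpha}=c$, while $\sum_k \gamma_k\gamma_l(\gamma_j\delta_{ik}-\gamma_i\delta_{jk})=\gamma_l(\gamma_i\gamma_j-\gamma_j\gamma_i)=0$ kills the $\beta$-term. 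This confirms that~\eqref{eq:Cijk} is a solution, and since $K_{kl}$ is invertible (noted after~\eqref{eq:local-S-metric}) it is the unique solution.

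There is essentially no obstacle here: once Lemma~\ref{l:MetricXi} is in hand, Lemma~\ref{L:Gyro-coeff-Veselova} follows by the one-line orthogonality remark above. If one prefers a constructive derivation rather than verification, the Sherman–Morrison formula gives $K^{kl}=\tfrac{1}{\alpha}\delta^{kl}-\tfrac{\beta}{\alpha(\alpha+\beta\|\gamma\|^2)}\gamma^k\gamma^l$, and applying this to the right-hand side again annihilates the second term by the same contraction $\sum_l\gamma_l(\gamma_j\delta_{il}-\gamma_i\delta_{jl})=0$, leaving $C_{ij}^k=\tfrac{c}{\alpha}(\gamma_j\delta_{ik}-\gamma_i\delta_{jk})$, which is~\eqref{eq:Cijk}. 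Either route is a short computation; the genuine content of the section is Lemma~\ref{l:MetricXi}, whose proof is deferred to Appendix~\ref{app:proofs-lemmas}.
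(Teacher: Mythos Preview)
Your proposal is correct and follows essentially the same route as the paper: both verify the candidate~\eqref{eq:Cijk} by splitting $K_{kl}=\alpha\,\delta_{kl}+\beta\,\gamma_k\gamma_l$, observing that the $\delta_{kl}$-part reproduces~\eqref{eq:XiXjXl} while the $\gamma_k\gamma_l$-part vanishes via $\sum_k(\gamma_j\delta_{ik}-\gamma_i\delta_{jk})\gamma_k=0$, and then invoking the invertibility of $K$ for uniqueness. Your added Sherman--Morrison remark is a nice alternative phrasing but not needed.
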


\begin{proof} 
Using 
\begin{equation*}
\sum_{k=1}^{n-1} \left ( \gamma_j \delta_{ik} -\gamma_i\delta_{jk} \right ) \delta_{kl}= \gamma_j \delta_{il} -\gamma_i\delta_{jl} \qquad
\mbox{and}
\qquad \sum_{k=1}^{n-1} \left ( \gamma_j \delta_{ik} -\gamma_i\delta_{jk} \right ) \gamma_k\gamma_l =0,
\end{equation*}
it follows, in view of \eqref{eq:Kij} and \eqref{eq:XiXjXl},   that $C_{ij}^k$ as given by~\eqref{eq:Cijk} satisfy
\begin{equation*}
\sum_{k=1}^{n-1}K_{kl}C_{ij}^k=\llangle [h_i, h_j ] , h_l \rrangle, \qquad i,j,k,l\in \{1, \dots, n-1\}.
\end{equation*}
In other words, the expressions~\eqref{eq:Cijk} for $C_{ij}^k$ are the unique solution
 to the system~\eqref{eq:gyr-coeff} that determines the gyroscopic coefficients.
\end{proof}

We are now ready to present:
\begin{proof}[Proof of Theorem~\ref{T:phiSimpleRouthnD}]
Lemma~\ref{L:Gyro-coeff-Veselova} implies
\begin{equation}
\label{eq:auxProofVesGyroT}
\mathcal{T} \left (\frac{\partial}{\partial s_i} ,\frac{\partial}{\partial s_j} \right ) = 
 \frac{ \left ( J_n-J_1 +m\ell \left ( \ell + \frac{b}{\gamma_n} \right ) \right )}{2 J_1 +(J_n-J_1)\gamma_n^2+ m(b+\ell \gamma_n)^2}
\left (  s_j \frac{\partial}{\partial s_i} -s_i \frac{\partial}{\partial s_j} \right ), \qquad 1\leq i,j\leq n-1.
\end{equation}
Considering that $\gamma_n=\pm \sqrt{1-s_1^2-\dots -s_{n-1}^2}$, we have $\partial \gamma_n / \partial s_k =- s_k/\gamma_n$
 for  $1\leq k \leq n-1$, and hence, from the expression~\eqref{eq:phi-nD} for 
 $\phi:\Ss^{n-1}\to \R$  we compute
 \begin{equation*}
\begin{split}
\frac{\partial \phi}{\partial s_k}   = 
 \frac{1}{2}\left (  \frac{2(J_n-J_1)\gamma_n 
 +2m\ell(b+\ell \gamma_n) }{2 J_1 +(J_n-J_1)\gamma_n^2+
  m(b+\ell \gamma_n)^2} \right ) \frac{s_k}{\gamma_n} 
  = 
  \left ( \frac{ J_n-J_1  + m\ell(\ell + \frac{b}{\gamma_n}) }{2 J_1 +(J_n-J_1)\gamma_n^2+ m(b+\ell \gamma_n)^2} \right )s_k,
 \qquad  1\leq k \leq n-1.
 \end{split}
\end{equation*}
%

 Therefore, Eq.~\eqref{eq:auxProofVesGyroT} may be rewritten as
 \begin{equation*}
\label{eq:gyro-T-Veselova}
\mathcal{T} \left (\frac{\partial}{\partial s_i} ,\frac{\partial}{\partial s_j} \right ) = \frac{\partial \phi}{\partial s_j}\frac{\partial}{\partial s_i} -
\frac{\partial \phi}{\partial s_i}\frac{\partial}{\partial s_j}, \qquad 1\leq i,j\leq n-1.
\end{equation*}
The  above expression, together with the 
 tensorial properties of $\mathcal{T}$, shows that the $\phi$-simplicity condition   \eqref{Ham-condition} 
 holds on the open dense subset of $\Ss^{n-1}$ where $\gamma_n\neq 0$. By continuity, it holds on all of $\Ss^{n-1}$.
 \end{proof}

\begin{remark}
We note that the notion of $\phi$-simplicity, and hence also our conclusions about  measure preservation
and Hamiltonisation, only depends on the kinetic energy and the constraints and does not involve
 the gravitational potential. This is a consequence of the
   {\em weak Noetherianity} of these concepts (see~\cite{GNMarr2018}). 
\end{remark}

\subsection{First integrals }
\label{SS:Integrability}

In this section we use Theorem~\ref{Th:Noether} to prove that 
\begin{equation}
\label{eq:FirstIntegralsRouth}
F_{ij}:=\sqrt{  2 J_1 +(J_n-J_1)\gamma_n^2+ m(b+\ell \gamma_n)^2 } \, \Omega_{ij}, \qquad 1\leq i,j\leq n-1,
\end{equation}
are first integrals of the system. In 3D, there is only one such integral whose existence had been proven by Borisov 
and Mamaev~\cite{BorMam2008} and, considering that $T^*\Ss^{2}$ has 2 degrees of freedom, it is sufficient to conclude
integrability of the problem. The question of integrability in $n$D will be addressed in a forthcoming publication.

We begin by noting that, in view of expression~\eqref{eq:horlift} for the horizontal lift of $\dot \gamma \in T_\gamma \Ss^{n-1}$
and the expression for the Lagrangian~\eqref{eq:Lag-nD}, the reduced Lagrangian $\mathcal{L}:T\Ss^{n-1} \to \R$ is given by
\begin{equation*}
\mathcal{L}(\gamma,  \dot \gamma) = \frac{1}{2} \left ( \I (\dot \gamma \wedge \gamma) , \dot \gamma \wedge \gamma \right )_\kappa 
+\frac{m}{2} \left \| (b+\ell \gamma_n) \dot \gamma - \ell \dot \gamma_n \gamma \right \|^2 -  m\ell \mathcal{G}\gamma_n,
\end{equation*}
which, using the specific form of the inertia tensor $\I$ given by Eqs.~\eqref{eq:InTensorGen} and \eqref{eq:MassTensorGen},
simplifies to
\begin{equation}
\label{eq:red-Lag-Routh}
\mathcal{L}(\gamma,  \dot \gamma) = \frac{1}{2} \left ( 2J_1 +(J_n-J_1)\gamma_n^2 +m(b+\ell \gamma_n)^2 \right ) \|\dot \gamma \|^2
+ \frac{m}{2} \left ( J_n-J_1+m\ell^2 \right ) \dot \gamma_n^2 -  m\ell \mathcal{G}\gamma_n,
\end{equation}
where we have repeatedly used the condition $(\gamma, \dot \gamma )_{\R^n}=0$, that holds
in view of our realisation~\eqref{eq:TSn-1} of the tangent bundle $T\Ss^{n-1}$.

Apart from the $G=\SE(n-1)$ action that allows us to reduce the dynamics to $T^*\Ss^{n-1}$, 
the system possesses additional symmetries due to our  assumptions on the mass distribution of the sphere.
These correspond to rotations of the body frame that preserve the symmetry axis $E_n$. The symmetry group is
hence $A=\SO(n-1)$  and the action of $\tilde a\in \SO(n-1)$ on $(R,x)\in (\SO(n),x)$ is given by
\begin{equation}
\label{eq:H-action}
\tilde a\cdot (R,x) =(R  a^{-1},x), \quad  \mbox{where}  \quad  a:=\left ( \begin{array}{c|c} \tilde a  & 0 \\ \hline 0 & 1 \end{array}  \right ) \in \SO(n).
\end{equation}
The tangent lift of this action on $(R,\Omega, x, \dot x)\in \SO(n)\times \so(n)\times \R^n\times\R^n$ is
\begin{equation*}
\tilde a\cdot (R,\Omega, x, \dot x) =(R a^{-1},\Ad_{a} \Omega, x, \dot x), 
\end{equation*}
Using that $a^{-1}\J a=\J$ in view of Eq.~\eqref{eq:MassTensorGen}, 
one may check that the  Lagrangian~\eqref{eq:Lag-nD} is invariant. The same
is true about the constraints~\eqref{eq:nDrolling} and~\eqref{eq:nDrubber}  so the dynamics is $A$-equivariant. 

A crucial observation is that the $A$-action defined by Eq.~\eqref{eq:H-action} commutes with the $G$-action defined by Eq.~\eqref{eq:action}, so there is a well  defined $A$-action on the shape space $\Ss^{n-1}$. 
As may be
easily shown from Eq.~\eqref{eq:H-action}  and the definition of $\gamma =R^{-1}e_n$, such action is by rotations of the sphere
$\Ss^{n-1}$ that fix the vertical axis. Namely, with the same notation for $a$ and $\tilde a$ as above:
\begin{equation*}
\tilde a \cdot \gamma = a \gamma, \qquad \gamma \in \Ss^{n-1},
\end{equation*}
where we recall that $\Ss^{n-1}$ is realised by its embedding in $\R^n$~\eqref{eq:Sn-1}.  In particular, this action
fixes the north and south pole of $\Ss^{n-1}$ and therefore is non-free.

The tangent lift of this action to 
$T\Ss^{n-1}$ is $\tilde a \cdot ( \gamma,  \dot \gamma) = ( a \gamma , a\dot \gamma)$  
and it is immediate to check that  it leaves 
the reduced Lagrangian~\eqref{eq:red-Lag-Routh}  invariant. It is also 
clear that the function
 $\phi$ given
by~\eqref{eq:phi-nD} is $A$-invariant so the hypothesis to apply Theorem~\ref{Th:Noether} hold.

The Lie algebra $\frak{a}=\so(n-1)$ is naturally identified with the set of $n\times n$ skew-symmetric matrices 
$\xi\in \so(n)$ such that $\xi E_n=0$. The infinitesimal generator of $E_i \wedge E_j \in \frak{a}$, 
$1\leq i,j \leq n-1$, is the vector field on $\Ss^{n-1}$ given by
\begin{equation*}
(E_i\wedge E_j)_{\Ss^{n-1}}(\gamma)=(E_i\wedge E_j)\gamma \in T_\gamma \Ss^{n-1}.
\end{equation*}

Using the expression~\eqref{eq:red-Lag-Routh} for the reduced Lagrangian, we compute the
action of the rescaled tangent bundle momentum map $\mathcal{J}:T\Ss^{n-1}\to \so(n-1)^*$~defined by \eqref{eq:tangentbundlemommap} on 
$ E_i\wedge E_j\in \so(n-1)$ to be  given by
\begin{equation*}
\begin{split}
\mathcal{J}(\gamma, \dot \gamma) (E_i\wedge E_j) &= \exp(\phi) \left [  \left ( 2J_1 +(J_n-J_1)\gamma_n^2 +m(b+\ell \gamma_n)^2 \right )  \left ( (E_i\wedge E_j) \gamma, \dot \gamma \right )_{\R^n}  \right . \\
& \left . \qquad \qquad \qquad + m \left ( J_n-J_1+m\ell^2 \right ) \dot \gamma_n \left ( (E_i\wedge E_j) \gamma, E_n \right )_{\R^n}  \right ] \\
&= \sqrt{2J_1 +(J_n-J_1)\gamma_n^2 +m(b+\ell \gamma_n)^2 } \, ( \dot \gamma_i \gamma_j - \dot \gamma_j
\gamma_i),
\end{split}
\end{equation*}
where, in the second equality, we have used the specific form~\eqref{eq:phi-nD} of the function $\phi$.
The quantity  $\dot \gamma_i \gamma_j - \dot \gamma_j \gamma_i$ is the $i$-$j$ entry of the matrix 
$-\gamma \wedge \dot \gamma$, which by Eq.~\eqref{eq:Omega-as-gammawedgedotgamma} coincides
with $-\Omega$. Therefore, by Theorem~\ref{Th:Noether}, the functions $F_{ij}$ given
by~\eqref{eq:FirstIntegralsRouth} are first integrals of the system as claimed.

\appendix

\section{Proof of Lemma~\ref{l:MetricXi}.}
\label{app:proofs-lemmas}

\subsection{Proof of~\eqref{eq:Kij}.}
\label{app:lemmaA1}

The proof is a calculation for which we outline the details. Taking into account the form of the kinetic energy metric
of the Lagrangian~\eqref{eq:Lag-nD}, and the expressions~\eqref{eq:defXi} for the horizontal lifts $h_i$, it follows that  for $k,l \in \{1,\dots, n-1\}$ we may write
\begin{equation}
\label{eq:K_kl-aux}
\llangle h_k,h_l \rrangle =A_{kl}+B_{kl},
\end{equation}
where
\begin{equation*}
\begin{split}
A_{kl}:  &= \left ( \I \left (  \gamma\wedge \left ( E_k  -\frac{\gamma_k}{\gamma_n}E_n \right ) \right ),  \gamma\wedge \left ( E_l  -\frac{\gamma_l}{\gamma_n}E_n \right ) \right )_\kappa,
\\
B_{kl}: & = m\left (    a \left (- E_k  + \frac{\gamma_k}{\gamma_n}E_n \right )+ \ell   \gamma\wedge \left ( E_k  -\frac{\gamma_k}{\gamma_n}E_n \right )E_n ,  a \left (- E_l  + \frac{\gamma_l}{\gamma_n}E_n \right )+ \ell   \gamma\wedge \left ( E_l  -\frac{\gamma_l}{\gamma_n}E_n \right )E_n\right )_{\R^n},
\end{split}
\end{equation*}
with $(\cdot, \cdot)_{\R^n}$ denoting the euclidean norm in $\R^n$.

We first compute the value of $A_{kl}$. Using the expressions~\eqref{eq:InTensorGen} and
\eqref{eq:MassTensorGen} for the inertia tensor one verifies that
\begin{equation}
\label{eq:inertia-appendix}
 \I \left (  \gamma\wedge \left ( E_k  -\frac{\gamma_k}{\gamma_n}E_n \right ) \right )
 = 2J_1  \left (  \gamma +  \frac{(J_n-J_1)\gamma_n}{2J_1}  E_n  \right ) \wedge
  \left ( E_k  + \frac{(J_1+J_n)\gamma_k}{2J_1\gamma_n} E_n \right ), \qquad j=1,\dots, n-1.
\end{equation}
The above expression, together with the general identity 
\begin{equation}
\label{eq:killing-appendix}
(u_1\wedge v_1,u_2\wedge v_2)_\kappa= (u_1,u_2)_{\R^n}(v_1,v_2)_{\R^n}-(u_1,v_2)_{\R^n}(u_2,v_1)_{\R^n},
\end{equation}
that holds for $u_1,v_1,u_2,v_2\in \R^n$, leads to
\begin{equation}
\label{eq:Aklaux}
A_{kl}=(2J_1+(J_n-J_1)\gamma_n^2)\delta_{kl}+ 
\left ( \frac{J_1+J_n}{\gamma_n^2} +J_n-J_1 \right ) \gamma_k\gamma_l. 
\end{equation}

On the other hand we have
\begin{equation}
\label{eq:aux-Bkl}
\gamma \wedge \left ( E_j - \frac{\gamma_j}{\gamma_n} E_n \right ) E_n= \gamma_j E_n-\gamma_n E_j - 
\frac{\gamma_j}{\gamma_n} \gamma,
\end{equation}
so we may write
\begin{equation}
\begin{split}
\label{eq:Bklaux}
B_{kl}  &= m\left (  \left (\ell + \frac{a}{\gamma_n} \right ) \gamma_k E_n - \left ( a + \ell \gamma_n \right ) E_k
- \ell \frac{\gamma_j}{\gamma_n} \gamma \, ,  \, 
 \left (\ell + \frac{a}{\gamma_n} \right ) \gamma_l E_n - \left ( a + \ell \gamma_n \right ) E_l
- \ell  \frac{\gamma_l}{\gamma_n} \gamma \right )_{\R^n} \\
&= m(a + \ell \gamma_n)^2 \delta_{kl}+ m\left ( \left ( \frac{a}{\gamma_n} +\ell \right )^2+\frac{\ell^2}{ \gamma^2_n} \right )
 \gamma_k\gamma_l.
\end{split}
\end{equation}
Substitution of~\eqref{eq:Aklaux} and~\eqref{eq:Bklaux} into~\eqref{eq:K_kl-aux}
proves~\eqref{eq:Kij}.

\subsection{Proof of~\eqref{eq:XiXjXl}.}
\label{app:lemmaA2}
The crucial part of the proof is to obtain the following expression for the Jacobi-Lie bracket of the vector fields 
$h_i$ and $h_j$:
\begin{equation}
\label{eq:commutator}
[h_i,h_j](R,x)=  \left ( \left ( E_i  -\frac{\gamma_i}{\gamma_n}E_n \right ) \wedge  \left ( E_j  -\frac{\gamma_j}{\gamma_n}E_n \right ) \, , 
\, 0 \, \right )\in \so(n)\times \R^n, \qquad i,j=1,\dots, n-1.
\end{equation}

Accept that this is the case for the moment. Then, 
similar to the calculation performed in section~\ref{app:lemmaA1},
we have
\begin{equation}
\label{eq:aux-lemma-jkl}
\llangle [h_i,h_j] , h_l \rrangle  = \tilde A_{ijl}+ \tilde B_{ijl},
\end{equation}
where
\begin{equation*}
\tilde A_{ijl} =  \left ( \I \left (  \gamma\wedge \left ( E_l  -\frac{\gamma_l}{\gamma_n}E_n \right ) \right ),  
  \left ( E_i  -\frac{\gamma_i}{\gamma_n}E_n \right ) \wedge  \left ( E_j  -\frac{\gamma_j}{\gamma_n}E_n \right ) 
  \right )_\kappa ,
\end{equation*}
and 
\begin{equation*}
 \tilde B_{ijl} =m  \left (\ell \left ( E_i  -\frac{\gamma_i}{\gamma_n}E_n \right ) \wedge  \left ( E_j  -\frac{\gamma_j}{\gamma_n}E_n  \right ) E_n\, , \,  
 a \left (- E_l  + \frac{\gamma_l}{\gamma_n}E_n \right )+ \ell   \gamma\wedge \left ( E_l  -\frac{\gamma_l}{\gamma_n}E_n \right ) E_n
  \right )_{\R^n}.
\end{equation*}
On the one hand, using again~\eqref{eq:inertia-appendix} and~\eqref{eq:killing-appendix}, one may
simplify
\begin{equation}
\label{eq:Aijl-tilde}
\tilde A_{ijl} = (J_n-J_1)(\gamma_j\delta_{il} -\gamma_i \delta_{jl}).
\end{equation}
On the other hand, using 
\begin{equation*}
 \left ( E_i  -\frac{\gamma_i}{\gamma_n}E_n \right ) \wedge  \left ( E_j  -\frac{\gamma_j}{\gamma_n}E_n 
  \right ) E_n
 =\frac{1}{\gamma_n}(\gamma_i E_j - \gamma_j E_i ),
\end{equation*}
together with~\eqref{eq:aux-Bkl}, allows one to write
\begin{equation*}
\begin{split}
\tilde B_{ijl} &=\frac{m\ell}{\gamma_n} \left ( \gamma_i E_j - \gamma_j E_i \, , \,  \left (\ell + \frac{a}{\gamma_n} \right ) \gamma_l E_n - \left ( a + \ell \gamma_n \right ) E_l
- \ell  \frac{\gamma_l}{\gamma_n} \gamma \right )_{\R^n} \\
& = m\ell \left ( \ell + \frac{a}{\gamma_n} \right ) \left ( \gamma_j \delta_{il}-\gamma_i \delta_{lj} \right ).
\end{split}
\end{equation*}
Substitution of the above expression, together with~\eqref{eq:Aijl-tilde}, onto~\eqref{eq:aux-lemma-jkl} 
proves~\eqref{eq:XiXjXl}.

Hence, to complete the proof, it only remains to establish the validity of~\eqref{eq:commutator}. The formula clearly holds
 for $i=j$, so below  we assume that $i\neq j$.
 As a consequence of the independence  of $h_i$ on $x$,  we have
\begin{equation*}
[h_i,h_j](R,x)=  \left ( \,  \left [  \gamma\wedge \left ( E_i  -\frac{\gamma_i}{\gamma_n}E_n \right )  
\, , \, \gamma\wedge \left ( E_j  -\frac{\gamma_j}{\gamma_n}E_n \right )  \right ]_{\SO(n)}
\, , \, W_{ij} \right )\in \so(n)\times \R^n, 
\end{equation*}
where $[\cdot , \cdot ]_{\SO(n)}$ is the Lie bracket of vector fields on $\SO(n)$ (written  in the 
left trivialisation as usual) and $W_{ij}\in \R^n$ has components
\begin{equation}
\label{eq:Wij}
W_{ij}^{(k)}= a \gamma\wedge \left ( E_i  -\frac{\gamma_i}{\gamma_n}E_n \right ) \left  [  -R_{kj} +
\frac{\gamma_j}{\gamma_n}R_{kn} \right ] -
 a\gamma\wedge \left ( E_j -\frac{\gamma_j}{\gamma_n}E_n \right )
  \left  [  -R_{ki} +
\frac{\gamma_i}{\gamma_n}R_{kn} \right ], \quad k=1,\dots, n-1, 
\end{equation}
$W_{ij}^{(n)}=0$, where $R_{kl}$ denotes the $k$-$l$ entry of the matrix $R\in \SO(n)$. 

On the one hand,  Garc\'ia-Naranjo and Marrero~\cite[Lemma 4.4]{GNMarr2018} compute:
\begin{equation*}
 \left [  \gamma\wedge \left ( E_i  -\frac{\gamma_i}{\gamma_n}E_n \right )  
\, , \, \gamma\wedge \left ( E_j  -\frac{\gamma_j}{\gamma_n}E_n \right )  \right ]_{\SO(n)}=
 \left ( E_i  -\frac{\gamma_i}{\gamma_n}E_n \right ) \wedge  \left ( E_j  -\frac{\gamma_j}{\gamma_n}E_n \right ),
\end{equation*}
which establishes the correctness of  the first entry of~\eqref{eq:commutator}. 

On the other hand, we shall prove 
 that 
\begin{equation}
\label{eq:Lemma-App-aux2}
 \gamma\wedge \left ( E_i  -\frac{\gamma_i}{\gamma_n}E_n \right ) \left  [  -R_{kj} +
\frac{\gamma_j}{\gamma_n}R_{kn} \right ] = \frac{R_{kn}}{\gamma_n^3}\gamma_i\gamma_j, \qquad k=1,\dots, n-1.
\end{equation}
Considering that a similar formula holds  when the roles of $i$ and $j$ are interchanged, it follows from
\eqref{eq:Wij} that
$W_{ij}$ vanishes and Eq.~\eqref{eq:commutator} indeed holds. The calculations to establish~\eqref{eq:Lemma-App-aux2} 
rely on the following identity whose proof may be found in Garc\'ia-Naranjo and Marrero~\cite[Lemma B.1]{GNMarr2018}:
\begin{equation}
\label{eq:basic-appendix}
E_i\wedge E_j\, [R_{kl}] =R_{ki}\delta_{jl} - R_{kj}\delta_{il}, \qquad i,j,k,l\in \{1, \dots, n\}.
\end{equation}
Using~\eqref{eq:basic-appendix}, and writing $\gamma=\sum_{l=1}^n\gamma_l E_l$ and $\gamma_l=R_{nl}$,
 it is straightforward to obtain
 (recall that we assume that $i,j,k \in \{1, \dots, n-1\}$ and $i\neq j$):
\begin{equation}
\label{eq:Lemma-App-aux3} 
\gamma \wedge E_i \left [ R_{kj} \right ] = -\gamma_j R_{ki}=\gamma \wedge E_i \left [  \frac{\gamma_j}{\gamma_n} R_{kn} \right ],
\qquad \gamma \wedge E_n \left [ R_{kj} \right ]= - \gamma_jR_{kn}.
\end{equation}
With a little bit more work, and using $\sum_{l=1}^nR_{kl}\gamma_l=\sum_{l=1}^nR_{kl}R_{nl}=\delta_{kn}=0$, one obtains
\begin{equation}
\label{eq:Lemma-App-aux4} 
\gamma \wedge E_n  \left [  \frac{\gamma_j}{\gamma_n} R_{kn} \right ] = -\frac{\gamma_j}{\gamma_n^2}R_{kn} -\gamma_j R_{kn}.
\end{equation}
Identities~\eqref{eq:Lemma-App-aux3} and \eqref{eq:Lemma-App-aux4} 
imply that~\eqref{eq:Lemma-App-aux2} holds.

\vspace{1cm}

\noindent \textbf{Acknowledgements:} The author acknowledges the Alexander von Humboldt Foundation  for a  Georg Forster Experienced Researcher Fellowship     that funded a research visit to TU Berlin where this work was done. He is also thankful to C. 
Fern\'andez for her help to produce Figure~\ref{F:Routh-sphere}. Finally he acknowledges Professor
V. Dragovi\'c for the invitation to submit this paper to the special issue in Theoretic and Applied Mechanics
in honor
of the 150th birthday of S. A. Chaplygin.

\vskip 1cm

\noindent LGN: Departamento de Matem\'aticas y Mec\'anica, 
IIMAS-UNAM.
Apdo. Postal 20-126, Col. San \'Angel,
Mexico City, 01000,  Mexico. luis@mym.iimas.unam.mx.  
\end{document}